\def\squiggly{\bgroup \markoverwith{\textcolor{red}{\lower3.5\p@\hbox{\sixly \char58}}}\ULon}
\pgfplotsset{compat=1.17}
\newtheorem{thm}{Theorem}[section]
\newtheorem{lem}[thm]{Lemma}
\newtheorem{pro}[thm]{Proposition}
\newtheorem{cor}[thm]{Corollary}
\newtheorem{defn}[thm]{Definition}
\newtheorem{rem}[thm]{Remark}
\begin{document}

\title{Momentum Space Feynman Integral for the Bound State Aharonov-Bohm Effect}

\author[1]{Alviu Rey Nasir\footnote{Corresponding author.  E-mail: alviurey.nasir@g.msuiit.edu.ph}}

\author[1]{Jingle Magallanes}

\author[2]{Herry~Pribawanto~Suryawan}

\author[3]{Jos\'{e} Lu\'{i}s Da Silva}

\affil[1]{Department of Physics, College of Science and Mathematics\newline \& Premier Research Institute of Science and Mathematics\newline Mindanao State University-Iligan Institute of Technology\newline Iligan City 9200, Philippines}
\affil[2]{Department of Mathematics, Faculty of Science and Technology\newline Sanata Dharma University, Yogyakarta 55283, Indonesia}
\affil[3]{CIMA Faculdade de C{\^e}ncias Exatas e da Engenharia\newline Campus Universit{\'a}rio da Penteada, Universidade da Madeira\newline 9020-105 Funchal, Portugal}
\maketitle

\begin{abstract}
We construct the Feynman integral for the Schr\"{o}dinger propagator in the polar conjugate momentum space, which describes the bound state Aharonov-Bohm effect, as a well-defined white noise functional.
\end{abstract}



\section{Introduction}	

In quantum mechanics, Feynman integrals have been shown to produce quantum propagators that solve the corresponding Schr\"{o}dinger equation for the system (see, e.g., Ref.~\cite{FeHi65}).  As a mathematically rigorous framework justifying the heuristic formulation of the Feynman integrals, White Noise Analysis (WNA) \cite{Hid75} has become a handy tool for a physicist to derive quantum propagators for select quantum systems in either the coordinate space or the momentum space; see, e.g., Refs.~\cite{HS83, Bock2013}.

Regarding the Feynman path integrals in the polar coordinates, Peak and Inomata first published their expression in terms of the Hamiltonian that contains the conjugate momenta of the polar coordinates \cite{peak1969summation}.  Gerry and Singh \cite{gerry1979feynman} were the first to investigate the Feynman path integral approach to the Aharonov-Bohm (AB) effect, a quantum mechanical phenomenon initially established by Aharonov and Bohm \cite{aharonov1959significance}. Using WNA, Bernido and Bernido \cite{BB02} derived the propagators for the particle on a circle and with the bound state AB potential in the coordinate space.  However, in these cases, they did not consider and obtain propagators in the conjugate momentum space.  On the other hand, the case of the momentum space AB effect (but not the bound state) in the Cartesian coordinates has already been considered (see, e.g., Ref.~\cite{dragoman2008aharonov}).  The case of the bound state AB effect in the conjugate momentum space has not yet been considered elsewhere.  

Inspired by the advantageous role of momentum variables in experimental design, this paper focuses on a quantum system for the bound state AB effect \cite{aharonov1959significance, kretzschmar1965aharonov} in the conjugate momentum space using the WNA framework. We show that the propagator reduces to that for the case of the particle on a circle when the magnetic flux is switched off. 

The paper is organized as follows. In Section~\ref{sec:WNA}, we give a brief review of the WNA.  In Section~\ref{sec:Feyn-AB-effect}, we describe the quantum system considered and define the Feynman integral for the system in the WNA framework. We derive its propagator and show that it solves the corresponding Schr\"{o}dinger equation in the conjugate momentum space.  In Section~\ref{sec:Alternative}, we write an alternative result within the WNA framework as a perturbation on a particle-on-a-circle system employing experimentally confirmed values.  
Finally, in Section~\ref{sec:conclusions}, we conclude and give a brief discussion of the results obtained.  

\section{White Noise Analysis}\label{sec:WNA}
In this section, we give a rather short introduction to WNA; see, e.g., Refs.~\cite{HKPS93, Bock2013} for more details. The starting point is the basic nuclear triple
\begin{equation}
    S_d \subset L_d^2 \subset S_d',
\end{equation}
where $L_d^2$ is the real separable Hilbert space of square integrable functions with norm 
\[
|f|_0^2 := \sum_{j=1}^d \int_{\mathbb{R}} f_j^2 (s)\, \mathrm{d}s, \quad f \in L_d^2 
\]
and $S_d$ is the nuclear space of $d$-dimensional vectors where each component is a Schwartz test function. $S_d'$ is the topological dual of $S_d$, the so-called vector-valued  tempered distributions. The dual pairing between $S'_d$ and $S_d$ is denoted by $\langle\cdot,\cdot\rangle$. The space $S'_d$  is provided with the $\sigma$-algebra of the cylinder sets $\mathscr{B}$. We define the Gaussian measure on the measurable space $(S'_d,\mathscr{B})$ using Minlos' theorem by fixing the characteristic function, 
\[
\int_{S'_d}\exp(\mathrm{i}\langle\omega,\varphi\rangle)\,\mathrm{d}\mu(\omega)=\exp\left(-\frac{1}{2}\langle\varphi,\varphi\rangle\right),\quad \varphi\in S_d.
\]
The triple $(S'_d,\mathscr{B},\mu)$ is called the vector-valued white noise space. Let $L^2(\mu):=L^2(S'_d,\mathscr{B},\mu;\mathbb{C})$ be the complex Hilbert space with a scalar product 
\[
(\!(F,G)\!) = \int_{S_d'} \bar{F}(\omega) G(\omega) \, \mathrm{d}\mu (\omega),\quad F,G\in L^2(\mu).
\]
Choosing a special subspace $(S_d)^1$ of test functionals, we construct a Gelfand triple around $L^2(\mu)$
\begin{equation}
    (S_d)^1 \subset L^2(\mu) \subset (S_d)^{-1}.
\end{equation}
The elements of the space $(S_d)^{-1}$ are called generalized white noise functionals or Kondratiev distributions, and the well-known Hida distributions $(S_d)^*$ (or generalized Brownian functionals) form a subspace, that is, $L^2(\mu) \subset (S_d)^* \subset (S_d)^{-1}$.  The dual pairing between $(S_d)^{-1}$ and $(S_d)^1$ is given as a bilinear extension of the inner product in $L^2(\mu)$ and is denoted by $\langle\!\langle\cdot,\cdot\rangle\!\rangle_\mu$. Distributions can be characterized by the so-called $T$-transform; see Theorem~\ref{Th1} below. Given $\Phi\in(S_d)^{-1}$, there exists $p,q\in\mathbb{N}_0$ such that we may define
\begin{equation}
    T\Phi(\xi) := \left\langle\!\! \left\langle \Phi, \mathrm{e}^{\mathrm{i}\langle \cdot , \xi \rangle} \right\rangle\!\! \right\rangle_\mu,\quad \xi\in U_{p,q}:=\big\{\xi\in S_d\mid |\xi|_p^2<2^{-q}\big\}. 
\end{equation}
It should be noted here that the definition of the $T$-transform can be extended via analytic continuation to the complexification of $S_d$ which is denoted by $S_{d,\mathbb{C}}$.

\begin{defn}\label{def:holmorphy}
	Let $U\subseteq S_{d,\mathbb{C}}$ be an open set and $F:U\longrightarrow\mathbb{C}$ a given function. Then $F$ is holomorphic on $U$ iff for all $\varphi_0\in U$
		\begin{itemize}
			\item[1.] for any $\varphi\in S_{d,\mathbb{C}}$ the map $\mathbb{C}\ni z\mapsto F(\varphi_0+z\varphi)\in\mathbb{C}$ is holomorphic in a neighborhood of zero in $\mathbb{C}$,
			\item[2.] there exists an open neighborhood $U'$ of $\varphi_0$ such that $F$ is bounded on $U'$.
		\end{itemize}
	$F$ is holomorphic at zero iff $F$ is holomorphic in a neighborhood of zero.
\end{defn}
\begin{thm}\label{Th1}
	Let $U\subseteq S_{d,\mathbb{C}}$ be an open set and $F:U\rightarrow \mathbb{C}$ be holomorphic at zero, then there exists a unique $\Phi\in(S_d)^{-1}$ such that $T\Phi=F$. Conversely, given $\Phi\in (S_d)^{-1}$, then $T\Phi$ is holomorphic at zero.  The correspondence between $F$ and $\Phi$ is bijective if we identify holomorphic functions that coincide on an open neighborhood of zero.
\end{thm}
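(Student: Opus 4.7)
The plan is to set up a bijection between $(S_d)^{-1}$ and germs of holomorphic functions at $0\in S_{d,\mathbb{C}}$ via Taylor coefficients, matching the chaos-type (or Wick-type) decomposition of a Kondratiev distribution on one side with the power-series expansion of $F$ on the other. The key analytic ingredient is that the coherent states $\mathrm{e}^{\mathrm{i}\langle\cdot,\xi\rangle}$ belong to $(S_d)^1$ for $\xi\in S_{d,\mathbb{C}}$ (with norms controlled in the Hilbertian scales defining $(S_d)^1$), and that the family $\{\mathrm{e}^{\mathrm{i}\langle\cdot,\xi\rangle}\colon\xi\in S_d\}$ is total in $(S_d)^1$. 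Totality gives injectivity of $T$ for free, so the real content is existence (surjectivity) and the holomorphy of $T\Phi$.

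For the easy direction, I would start with $\Phi\in(S_d)^{-1}$ and show $T\Phi$ is holomorphic at $0$. The definition of $(S_d)^{-1}$ fixes $p,q\in\mathbb{N}_0$ such that $\Phi$ is continuous with respect to the norm whose unit ball contains the exponentials $\mathrm{e}^{\mathrm{i}\langle\cdot,\xi\rangle}$ for $\xi\in U_{p,q}$. Verifying the two conditions of Definition~\ref{def:holmorphy} then reduces to (i) checking that $z\mapsto\langle\!\langle\Phi,\mathrm{e}^{\mathrm{i}\langle\cdot,\varphi_0+z\varphi\rangle}\rangle\!\rangle_\mu$ is complex differentiable on a disk around $0$, which follows because the map $z\mapsto\mathrm{e}^{\mathrm{i}\langle\cdot,\varphi_0+z\varphi\rangle}\in(S_d)^1$ is itself holomorphic in the Banach-space sense on the appropriate disk, and (ii) local boundedness of $T\Phi$, which is immediate from the continuity of $\Phi$ on the same Hilbertian neighborhood.

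The harder direction is reconstruction. Given $F:U\to\mathbb{C}$ holomorphic at $0$, I would use the Cauchy integral formula along complex lines to produce, for each $n$, a symmetric $n$-linear form $F_n$ on $S_{d,\mathbb{C}}$ via the Taylor expansion
\[
F(\xi)=\sum_{n=0}^{\infty}\frac{1}{n!}\langle F_n,\xi^{\otimes n}\rangle,
\]
and derive Cauchy estimates of the form $|\langle F_n,\xi^{\otimes n}\rangle|\le C\,r^{-n}\,|\xi|_p^n$ on the polydisk where $F$ is bounded. From such estimates one shows that $F_n\in S_{d,\mathbb{C},-p}^{\widehat{\otimes}n}$ with norms growing at most geometrically, which is precisely the condition needed to assemble
\[
\Phi:=\sum_{n=0}^\infty\langle\!:\!\omega^{\otimes n}\!:\!,\Phi_n\rangle
\]
(suitable normalized Wick/Hermite kernels $\Phi_n$ derived from $F_n$) into an element of $(S_d)^{-1}$. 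A direct computation of $T$ on the monomials $\langle\!:\!\omega^{\otimes n}\!:\!,\Phi_n\rangle$ then recovers $F$ on a neighborhood of zero, which is all that uniqueness demands after the identification of germs.

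The main obstacle I anticipate is keeping the norms bookkeeping honest: the Cauchy estimates give $F_n$ in a dual Hilbert space whose index $p$ depends on the neighborhood where $F$ is bounded, and the exponential generating series that defines $(S_d)^{-1}$ tolerates only geometric growth at a precise rate. Matching these two scales — showing that any $(p,q)$ from the holomorphy side can be converted into an admissible $(p',q')$ on the distribution side, with the correct factorial weights coming from Wick ordering — is the step that has to be done carefully rather than formally; every other step is either soft functional analysis or a density argument using totality of the coherent states.
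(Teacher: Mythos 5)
The paper does not prove this theorem: it is imported verbatim from the white noise literature (the Kondratiev--Leukert--Streit characterization theorem, cf.\ Refs.~\cite{HKPS93, Bock2013}), so there is no in-paper argument to compare yours against. Judged on its own terms, your outline is the standard proof and the strategy is sound: injectivity from totality of the coherent states $\mathrm{e}^{\mathrm{i}\langle\cdot,\xi\rangle}$ in $(S_d)^1$, the easy direction from Banach-space-valued holomorphy of $\xi\mapsto \mathrm{e}^{\mathrm{i}\langle\cdot,\xi\rangle}$ together with the continuity of $\Phi$ on a fixed Hilbertian neighborhood, and reconstruction via Taylor coefficients, polarization, and Cauchy estimates. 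You also correctly identify the genuinely delicate step, namely converting the $(p,q)$ on which $F$ is bounded into an admissible pair for the $(S_d)^{-1}$ norm with only geometric growth of the kernels. Two points deserve more care than your sketch gives them. First, the paper's transform is the $T$-transform, built on $\mathrm{e}^{\mathrm{i}\langle\cdot,\xi\rangle}$ rather than the Wick exponential; since $T\Phi(\xi)=\mathrm{e}^{-\frac{1}{2}\langle\xi,\xi\rangle}S\Phi(\mathrm{i}\xi)$, the Taylor coefficients $F_n$ of $F$ are not the chaos kernels directly, and you must first divide out the nonvanishing holomorphic factor $\mathrm{e}^{-\frac{1}{2}\langle\xi,\xi\rangle}$ (or equivalently work with Wick-ordered exponentials throughout) before reading off $\Phi_n=F_n/n!$; your phrase ``suitable normalized Wick/Hermite kernels'' gestures at this but the bookkeeping of the extra Gaussian factor and the substitution $\xi\mapsto\mathrm{i}\xi$ is exactly where sign and index errors creep in. Second, the polarization step costs a factor of order $n^n/n!\sim \mathrm{e}^n$ in passing from the bound on the diagonal $\langle F_n,\xi^{\otimes n}\rangle$ to the Hilbert--Schmidt norm of the symmetric kernel; this is still geometric and hence absorbable into the $2^{-q'n}$ weight of the $(S_d)^{-1}$ norm (this is precisely why the theorem holds for the Kondratiev space $(S_d)^{-1}$ but fails in this generality for the Hida space $(S_d)^*$, which would require the stronger growth-order-$2$ bound), so your claim that ``geometric growth is precisely the condition needed'' is right, but it should be stated as the reason the argument lands in $(S_d)^{-1}$ and not in a smaller distribution space.
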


In applications, we have to handle the convergence of sequences of distributions from $(S_d)^{-1}$ as well as integrals of $(S_d)^{-1}$-valued functions. The following two corollaries are a consequence of Theorem~\ref{Th1} and will be applied in what follows.
	\begin{cor}\label{seq}
		Let $\left(\Phi_n\right)_{n\in\mathbb{N}}\subset(S_d)^{-1}$ be a sequence such that there exists $U_{p,q}\subset S_d$, $p,q\in\mathbb{N}_0$, so that
		\begin{itemize}
			\item[1.] all $T\Phi_n$ are holomorphic on $U_{p,q}$,
			\item[2.] there exists a $C>0$ such that $\vert T\Phi_n(\varphi)\vert \leq C$ for all $\varphi\in U_{p,q}$ and all $n\in\mathbb{N}$,
			\item[3.] $\left(T\Phi_n(\varphi)\right)_{n\in\mathbb{N}}$ is a Cauchy sequence in $\mathbb{C}$ for all $\varphi\in U_{p,q}$.
		\end{itemize}
	   Then $\left(\Phi_n\right)_{n\in\mathbb{N}}$ converges strongly in $(S_d)^{-1}$.
	\end{cor}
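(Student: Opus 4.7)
The plan is to apply the characterization Theorem~\ref{Th1} twice and glue the two applications with a Vitali-type convergence argument. First, define $F(\varphi) := \lim_{n\to\infty} T\Phi_n(\varphi)$ for $\varphi \in U_{p,q}$; this pointwise limit exists by hypothesis~(3). The family $(T\Phi_n)$ is uniformly bounded on $U_{p,q}$ by hypothesis~(2) and each $T\Phi_n$ is holomorphic there by hypothesis~(1), so Vitali's theorem, applied on the finite-dimensional complex slices $z \mapsto T\Phi_n(\varphi_0 + z\varphi)$ that appear in Definition~\ref{def:holmorphy}, promotes the pointwise convergence to uniform convergence on a (possibly smaller) neighborhood of $0$ and shows that $F$ is holomorphic at $0$. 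The converse direction of Theorem~\ref{Th1} then gives a unique $\Phi \in (S_d)^{-1}$ with $T\Phi = F$, which is the natural candidate for the strong limit.

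The remaining task is to upgrade the $T$-transform convergence to strong convergence of the distributions $\Phi_n$ themselves. For this I would invoke the quantitative form of the characterization theorem: a uniform bound $|T\Phi_n(\varphi)| \le C$ on $U_{p,q}$ places the whole sequence $(\Phi_n)$ inside a single bounded set of one of the Banach building blocks of the inductive-limit space $(S_d)^{-1}$, with indices $(p',q')$ only slightly enlarged from $(p,q)$. The uniform convergence of $T\Phi_n$ to $F$ on a smaller neighborhood, combined with the same quantitative estimates, translates into norm convergence $\Phi_n \to \Phi$ in that Banach space. Since the strong topology on $(S_d)^{-1}$ is the inductive-limit topology over these Banach spaces, convergence in a single component implies strong convergence in $(S_d)^{-1}$.

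The main obstacle is precisely this last step. Pointwise convergence of $T$-transforms is only enough to pin down the \emph{candidate} limit $\Phi$; the genuine work is carried by the quantitative Kondratiev--Streit-type estimates that convert sup-norms of $T$-transforms on $U_{p,q}$ into distribution norms on slightly enlarged scales. Everything else, namely the existence of $F$, its holomorphy at zero, and the identification $F = T\Phi$, is a direct application of Vitali's theorem and Theorem~\ref{Th1}, and only the final passage from ``bounded plus $T$-pointwise Cauchy'' to ``strongly Cauchy'' requires appealing to the underlying Banach-space structure behind the characterization theorem rather than to its statement alone.
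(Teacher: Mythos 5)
The paper offers no proof of this corollary: it is imported from the white noise literature as a ``consequence of Theorem~\ref{Th1}'', and the standard proof there is essentially the one you outline --- Vitali on the one-dimensional slices $z\mapsto T\Phi_n(\varphi_0+z\varphi)$ to obtain a limit $F$ holomorphic at zero, Theorem~\ref{Th1} to produce the candidate $\Phi$ with $T\Phi=F$, and the quantitative estimates behind the characterization theorem to place all $\Phi_n$ in a bounded set of a single Banach component of the inductive limit and convert $T$-transform convergence into norm convergence there. Your outline is structurally correct; the one joint worth tightening is that slice-wise Vitali does not by itself yield uniform convergence on an infinite-dimensional neighborhood of zero --- the standard route instead applies Cauchy's integral formula on the slices to get convergence of the Taylor (chaos) coefficients of $T\Phi_n$, and then uses the uniform norm bounds to dominate the sum over coefficients, which is precisely the role your ``quantitative Kondratiev--Streit-type estimates'' must play.
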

	
	\begin{cor}\label{Bochner}
		Let $( \Lambda, \mathscr{A}, \nu )$ be a measure space and $\lambda \mapsto \Phi_\lambda$ a mapping from $\Lambda$ to $\left(S_d\right)^{-1}$.  We assume that there exist $U_{p,q}\subset S_d$, $p,q \in \mathbb{N}_0$, such that
		\begin{itemize}
			\item[1.] $T\Phi_\lambda$ is holomorphic on $U_{p,q}$ for every $\lambda\in\Lambda$,
			\item[2.] the mapping $\lambda \mapsto T\Phi_\lambda(\varphi)$ is measurable for every $\varphi\in U_{p,q}$,
			\item[3.] there exists $C\in L^1(\Lambda,\mathscr{A},\nu)$ such that
			\begin{equation}
				\vert T\Phi_\lambda(\varphi)\vert \leq C(\lambda)
			\end{equation}
			for all $\varphi\in U_{p,q}$ and for $\nu$-almost all $\lambda\in \Lambda$.
		\end{itemize}
		Then $\Phi_\lambda$ is Bochner integrable.  In particular,
		\begin{equation}
			\int_\Lambda \Phi_\lambda\, \mathrm{d}\nu (\lambda)\in \left(S_d\right)^{-1}
		\end{equation}
		and we may interchange dual pairing and integration
			\begin{equation}
				\left\langle \!\!\!\left\langle \int_{\Lambda}\Phi_{\lambda}\,\mathrm{d}\nu(\lambda),\xi\right\rangle \!\!\!\right\rangle_\mu=\int_{\Lambda}\langle\!\langle\Phi_{\lambda},\xi\rangle\!\rangle_\mu\,\mathrm{d}\nu(\lambda),\quad\xi\in(S_{d})^{1}.
			\end{equation}
	\end{cor}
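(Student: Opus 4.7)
The plan is to reduce the statement to a property of $T$-transforms via the characterization theorem (Theorem~\ref{Th1}). I would define the candidate integral through its $T$-transform by
\[
F(\varphi) := \int_\Lambda T\Phi_\lambda(\varphi)\,\mathrm{d}\nu(\lambda), \qquad \varphi \in U_{p,q},
\]
which is well defined on $U_{p,q}$ by hypotheses (2) and (3), with the uniform bound $|F(\varphi)| \leq \int_\Lambda C(\lambda)\,\mathrm{d}\nu(\lambda)$.

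Next I would show that $F$ is holomorphic at zero in the sense of Definition~\ref{def:holmorphy}. The displayed bound already gives boundedness on $U_{p,q}$. For one-dimensional holomorphy along each direction, fix $\varphi_0 \in U_{p,q}$ and $\varphi \in S_{d,\mathbb{C}}$; on a small complex disk $D$ chosen so that $\varphi_0+z\varphi \in U_{p,q}$ for all $z\in D$, I would apply Morera's theorem: for any closed triangle $\Delta \subset D$,
\[
\oint_\Delta F(\varphi_0+z\varphi)\,\mathrm{d}z = \int_\Lambda \oint_\Delta T\Phi_\lambda(\varphi_0+z\varphi)\,\mathrm{d}z\,\mathrm{d}\nu(\lambda) = 0,
\]
where the interchange is justified by Fubini against the $L^1$-dominant $C(\lambda)$, and the inner integral vanishes by hypothesis (1) and Cauchy's theorem; continuity of $z\mapsto F(\varphi_0+z\varphi)$ then follows from dominated convergence. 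Theorem~\ref{Th1} produces a unique $\Phi \in (S_d)^{-1}$ with $T\Phi = F$, which we declare to be $\int_\Lambda \Phi_\lambda\,\mathrm{d}\nu(\lambda)$.

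For the interchange of dual pairing and integration, the identity is immediate on exponential test functionals $\xi_\varphi := \mathrm{e}^{\mathrm{i}\langle\cdot,\varphi\rangle}$ with $\varphi \in U_{p,q}$, since both sides equal $F(\varphi)$ by construction. Extension to a general $\xi \in (S_d)^1$ then follows by density of the linear span of such exponentials in $(S_d)^1$ together with continuity. A cleaner alternative is a Riemann-sum route: approximate $\int_\Lambda \Phi_\lambda\,\mathrm{d}\nu(\lambda)$ by simple-function partial sums whose $T$-transforms converge pointwise on $U_{p,q}$ to $F$ and are uniformly bounded by $\int C(\lambda)\,\mathrm{d}\nu(\lambda)$; Corollary~\ref{seq} then yields strong convergence in $(S_d)^{-1}$, and the interchange persists in the limit because it is trivial for finite sums.

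The step I expect to be the main obstacle is the passage from the exponential functionals to a generic $\xi \in (S_d)^1$. This requires a quantitative form of Theorem~\ref{Th1} that bounds an appropriate Kondratiev seminorm of $\Phi_\lambda$ in terms of $\sup_{\varphi\in U_{p,q}}|T\Phi_\lambda(\varphi)|$, which in turn produces a $\nu$-integrable majorant for $|\langle\!\langle \Phi_\lambda,\xi\rangle\!\rangle_\mu|$ of the form $C(\lambda)$ times a continuous seminorm of $\xi$. Once such quantitative control is in place, a standard dominated-convergence argument closes the gap.
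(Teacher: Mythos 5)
The paper offers no proof of this corollary at all---it is imported from the white noise literature as a direct consequence of Theorem~\ref{Th1}---and your argument is precisely the standard derivation it has in mind: define the integral through the scalar integral $F(\varphi)=\int_\Lambda T\Phi_\lambda(\varphi)\,\mathrm{d}\nu(\lambda)$, verify boundedness and directional holomorphy at zero via the $L^1$ majorant, Fubini and Morera, and invoke the characterization theorem to obtain the distribution. You also correctly isolate the one genuinely technical point, namely that true Bochner integrability and the interchange for a general $\xi\in(S_d)^1$ (rather than just the exponential functionals) require the quantitative norm estimates implicit in Theorem~\ref{Th1}, bounding a Kondratiev norm of $\Phi_\lambda$ by $\sup_{\varphi\in U_{p,q}}|T\Phi_\lambda(\varphi)|\leq C(\lambda)$; with that in hand the proof closes exactly as you describe.
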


\begin{lem}[cf.~Lemma 2.17 in Ref.~\cite{Bock2013a}]
    Let ${K}$ be a $d \times d$ block operator matrix on $L^2_{d,\mathbb{C}}$, such that ${\bm{N}}:={\mathrm{Id}} + {K}$ is bounded with bounded inverse.  Furthermore, assume that $\det ({\mathrm{Id}}+{K})$ exists and is different from zero.  Let $M_{\bm{N}^{-1}}$ be the matrix given by an orthogonal system $(\eta_k)_{k=1,\dots,J}$, $J\in \mathbb{N}$, of nonzero functions of $L^2_{d,\mathbb{C}}$  in the bilinear form $(\cdot, {\bm{N}}^{-1} \cdot)$, that is, $(M_{{\bm{N}}^{-1}})_{i,j}=(\eta_i, {\bm{N}}^{-1}\eta_j), \;\; 1\leq i, j\leq J$.  Under the assumption that for every $f\in L^2_{d,\mathbb{C}}$ either
			\begin{equation}
			\big(f,\Re(M_{\bm{N}^{-1}})f\big)_0>0 \quad \mathrm{or} \quad \Re(M_{\bm{N}^{-1}})=0 \quad \mathrm{and} \quad \big(f,\Im(M_{\bm{N}^{-1}})f\big)_0 \neq 0,
            \label{conditionofthelemma}
			\end{equation}
			where $M_{\bm{N}^{-1}} = \Re(M_{\bm{N}^{-1}}) + \mathrm{i}\Im(M_{\bm{N}^{-1}})$ with real matrices $\Re(M_{\bm{N}^{-1}})$ and $\Im(M_{\bm{N}^{-1}})$, then
			\begin{eqnarray}
			\Phi_{{ K}} := \mathrm{N}\exp(-\frac{1}{2}\langle \cdot, {K} \cdot \rangle)\cdot \exp(\mathrm{i}\langle \cdot, g \rangle ) \prod_{k=1}^J \delta_0(\langle \cdot, \eta_k \rangle - y_k),
			\end{eqnarray}
			for $g \in L^2_{d,\mathbb{C}}$, $ t>0$, $y_k \in \mathbb{R}, k=1,\dots,J$, exists as a Hida distribution.
			
Moreover for $f\in S_d$
\begin{multline}
T\Phi_{{K}}(f)=\frac{1}{\sqrt{(2\pi)^J \det(M_{\bm{N}^{-1}})}} \frac{1}{\sqrt{\det({\mathrm{Id}}+{K}))}} \\
\times \exp\left(-\frac{1}{2}\langle {f+g}, ({\mathrm{Id}}+{K})^{-1}{(f+g)}\rangle\right) \exp\left(\frac{1}{2}(u^\top (M_{\bm{N}^{-1}}^{-1})u )\right), \label{formula}
\end{multline}
where
\begin{equation}
			u=\big(\mathrm{i}y_1+(\eta_1,\bm{N}^{-1}(f+g)),\dots,\mathrm{i}y_J+(\eta_J,\bm{N}^{-1}(f+g)\big).
			\end{equation}
   \label{lemma}
		\end{lem}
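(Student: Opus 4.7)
The plan is to invoke the characterization in Theorem~\ref{Th1}: to exhibit $\Phi_{K}$ as a generalized white noise functional and simultaneously identify its $T$-transform, it suffices to produce a holomorphic bounded function on some $U_{p,q}\subset S_{d,\mathbb{C}}$ that arises as the $T$-transform of a natural regularization of $\Phi_{K}$. The nontrivial ingredients are the Donsker delta factors, so I would represent them through the oscillatory integral
\begin{equation*}
\prod_{k=1}^{J}\delta_{0}\bigl(\langle\cdot,\eta_{k}\rangle-y_{k}\bigr)=\frac{1}{(2\pi)^{J}}\int_{\mathbb{R}^{J}}\exp\!\Bigl(\mathrm{i}\sum_{k=1}^{J}\lambda_{k}\bigl(\langle\cdot,\eta_{k}\rangle-y_{k}\bigr)\Bigr)\,\mathrm{d}\lambda,
\end{equation*}
so that $\Phi_{K}$ is formally a Bochner integral in $\lambda\in\mathbb{R}^{J}$ of the shifted generalized Gauss kernel
\begin{equation*}
\Psi_{\lambda}:=\mathrm{N}\exp\!\Bigl(-\tfrac12\langle\cdot,K\cdot\rangle\Bigr)\exp\!\Bigl(\mathrm{i}\bigl\langle\cdot,g+\textstyle\sum_{k}\lambda_{k}\eta_{k}\bigr\rangle\Bigr)\mathrm{e}^{-\mathrm{i}\lambda\cdot y}.
\end{equation*}

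For fixed $\lambda$, the $T$-transform of $\Psi_{\lambda}$ is the known formula for a generalized Gauss kernel with linear drift (cf.~Ref.~\cite{Bock2013a}),
\begin{equation*}
T\Psi_{\lambda}(f)=\frac{\mathrm{e}^{-\mathrm{i}\lambda\cdot y}}{\sqrt{\det(\mathrm{Id}+K)}}\,\exp\!\Bigl(-\tfrac12\bigl\langle h_{\lambda},\bm{N}^{-1}h_{\lambda}\bigr\rangle\Bigr),\qquad h_{\lambda}:=f+g+\sum_{k}\lambda_{k}\eta_{k}.
\end{equation*}
Expanding $\langle h_{\lambda},\bm{N}^{-1}h_{\lambda}\rangle$ separates a $\lambda$-free prefactor $\exp(-\tfrac12\langle f+g,\bm{N}^{-1}(f+g)\rangle)/\sqrt{\det(\mathrm{Id}+K)}$ from a Gaussian integrand in $\lambda$ of the form $\exp(-\tfrac12\lambda^{\top}M_{\bm{N}^{-1}}\lambda-\lambda^{\top}v)$, where $v_{k}=\mathrm{i}y_{k}+(\eta_{k},\bm{N}^{-1}(f+g))$ matches the vector $u$ in the statement. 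Completing the square and evaluating the $J$-dimensional complex Gaussian integral produces the normalization $((2\pi)^{J}\det M_{\bm{N}^{-1}})^{-1/2}$ together with the factor $\exp(\tfrac12 v^{\top}M_{\bm{N}^{-1}}^{-1}v)$, reproducing formula~(\ref{formula}) with $u=v$.

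To make this rigorous via Corollary~\ref{Bochner}, I would check that $\lambda\mapsto T\Psi_{\lambda}(f)$ is holomorphic on a suitable $U_{p,q}$ (immediate, since $f\mapsto T\Psi_{\lambda}(f)$ is holomorphic and the $\lambda$-dependence is analytic), measurable in $\lambda$ (clear), and bounded uniformly in $f\in U_{p,q}$ by an $L^{1}(\mathbb{R}^{J},\mathrm{d}\lambda)$ function. The last point is the main obstacle, and is precisely what hypothesis~(\ref{conditionofthelemma}) is tailored to handle. When $\Re(M_{\bm{N}^{-1}})$ is positive definite, the Gaussian factor $\exp(-\tfrac12\lambda^{\top}\Re(M_{\bm{N}^{-1}})\lambda)$ dominates the linear term coming from $v$ once $U_{p,q}$ is shrunk enough to control $|(\eta_{k},\bm{N}^{-1}(f+g))|$, giving an integrable bound directly. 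In the purely imaginary nondegenerate case I would first replace each $\delta_{0}$ by the Gaussian mollifier $\delta_{0}^{\varepsilon}(x)=(2\pi\varepsilon)^{-1/2}\exp(-x^{2}/2\varepsilon)$, which effectively shifts $M_{\bm{N}^{-1}}\mapsto M_{\bm{N}^{-1}}+\varepsilon\,\mathrm{Id}$ and puts us back in the positive-definite regime for each $\varepsilon>0$; applying Corollary~\ref{seq}, the nondegeneracy of $\Im M_{\bm{N}^{-1}}$ gives a uniform bound on $U_{p,q}$ and a pointwise limit as $\varepsilon\downarrow 0$ equal to the right-hand side of~(\ref{formula}). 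Theorem~\ref{Th1} then identifies this limit as $T\Phi_{K}$ and yields $\Phi_{K}$ as a bona fide Hida distribution.
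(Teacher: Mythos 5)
The paper does not prove this lemma itself but imports it from Lemma 2.17 of Ref.~\cite{Bock2013a}, and your argument reconstructs essentially the proof given there: Fourier representation of the Donsker deltas, the known $T$-transform of the generalized Gauss kernel, Gaussian integration over $\lambda\in\mathbb{R}^{J}$ (with hypothesis~\eqref{conditionofthelemma} securing integrability, and the mollifier/limit argument in the purely imaginary case), and Corollaries~\ref{seq}--\ref{Bochner} plus Theorem~\ref{Th1} to justify the exchange of limits and identify the result as a Hida distribution. Your computation of the completed square correctly reproduces $u$ and the prefactor $((2\pi)^{J}\det M_{\bm{N}^{-1}})^{-1/2}$, so the proposal is correct and in line with the cited source.
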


\begin{rem}[cf.~Remark 2.13 in Ref.~\cite{Bock2013a}]\label{normalized}
     In case of a divergent or undefined factor $1/\sqrt{\det ({\mathrm{Id}}+{K})}$, the normalization constant $\mathrm{N}$ can be chosen properly to ``eliminate'' the factor in Eq.~\eqref{formula}.
\end{rem}

\section{The Feynman Integrals for the Bound State AB Effect}
\label{sec:Feyn-AB-effect}
Consider an illustration of the bound state AB system as depicted in Figure 1.

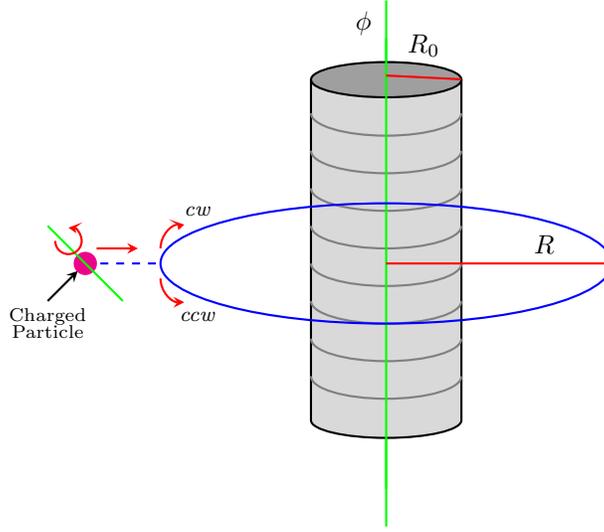
\begin{figure}[ht] 
\caption{\label{solenoid1}An illustration of the paths taken by a spinning electron around the solenoid from which came the magnetic flux ``inducing'' Aharonov-Bohm effect.}
\centering
 \centering 
\usetikzlibrary {shapes.geometric,arrows.meta} 
\begin{tikzpicture}[thick,>=stealth]  
\draw[green] (0,-3) -- (0,3); 
\node[cylinder,
    draw,
    cylinder uses custom fill,
    cylinder body fill = gray!30,
    cylinder end fill = gray!75,    
    aspect=2, 
    minimum height=5cm,
    minimum width=2cm,
    rotate=90]{};
\draw [gray](-1,2.0) arc [start angle=-180, end angle=0, x radius=1cm, y radius=.3cm];
\draw [gray](-1,1.5) arc [start angle=-180, end angle=0, x radius=1cm, y radius=.3cm];
\draw [gray](-1,1.0) arc [start angle=-180, end angle=0, x radius=1cm, y radius=.3cm];
\draw [gray](-1,.5) arc [start angle=-180, end angle=0, x radius=1cm, y radius=.3cm];
\draw [gray](-1,0) arc [start angle=-180, end angle=0, x radius=1cm, y radius=.3cm];
\draw [gray](-1,-.5) arc [start angle=-180, end angle=0, x radius=1cm, y radius=.3cm];
\draw [gray](-1,-1.0) arc [start angle=-180, end angle=0, x radius=1cm, y radius=.3cm];
\draw [gray](-1,-1.5) arc [start angle=-180, end angle=0, x radius=1cm, y radius=.3cm];
\draw[green] (0,-3.5) -- (0,3.5); 
\draw[arrows={->[scale=.8]},red](-3.85,.2)--(-3.3,.2);
\draw[arrows={->[slant=-.9,scale=.8]},red](-3,.2) arc [start angle=180, end angle=90, radius=9pt];
\draw (-2.5,.7) node {\footnotesize \textit{cw}}; 
\draw[arrows={->},red](-3,-.2) arc [start angle=-180, end angle=-90, radius=9pt];
\draw (-2.5,-.7) node {\footnotesize \textit{ccw}}; 
\draw[->](-4.5,-.5)--(-4.1,-.1);
\draw (-4.5,-.7) node {\scriptsize Charged};
\draw (-4.5,-.9) node {\scriptsize Particle};
\filldraw [magenta] (-4,0) circle [radius=4pt]; 
\draw (-.3,3.2) node {$\phi$}; 
\draw [red] (0,2.5)--(1,2.45); 
\draw (.5,2.9) node {$R_0$}; 
\draw[blue,rotate=90] (0, 0) ellipse (.8cm and 3cm);
\draw[red](0,0)--(3,0) node [above, pos=.70, above,color=black] {$R$};
\draw[green](-4.5,.5)--(-3.5,-.5);

\draw [arrows = {->[slant=.6]},red] (-4.4,.3) arc [start angle=-180, end angle=90, radius=5pt];
\draw[blue,style=dashed](-3.8,0)--(-3,0);
\end{tikzpicture}
\end{figure}

The circular path taken by a charged particle of charge $-e$ has a radius $R$ that is distant from the radius $R_0$ of the solenoid from which the magnetic flux $\phi$ comes.  The charged particle can either go \textit{cw} or \textit{ccw} with a conjugate momentum $p_{\theta,0}$ starting on the angular position $\theta_0$ at time $t_0 \geq 0$, and will gain a conjugate momentum of $p_\theta$ on an angular position $\theta$ at a later time $t > t_0$.  In a later calculation, we shall pin the final conjugate momentum at $p_{\theta}'$.  Note that the charged particle does not pass through a region where the magnetic flux is present, nor does it penetrate the solenoid.  In other words, the magnetic flux is in the excluded region.

The AB potential for the bound state has the form (See, e.g., Refs.~\cite{gerry1979feynman, BB02, kretzschmar1965aharonov, peshkin1990aharonov})
\begin{equation}
    V_{\mathrm{AB}} := - \frac{e\phi}{2\pi \hbar c} \dot{\theta} = \alpha \dot{\theta}, \label{V_AB}
\end{equation}
where $\hbar = h/(2\pi)$ with $h$ being the Planck's constant, $c$ is the speed of light in vacuum, $\alpha := -e\phi/(2\pi \hbar c)$ is the magnetic flux parameter, and $\dot{\theta} := \mathrm{d}\theta / \mathrm{d}t$ is the angular velocity.

In the polar coordinates, $\Vec{r} = (r, \theta)$, the Hamiltonian of the system is
\begin{eqnarray}
    H &:=& H(r,\theta,\dot{\theta},t,t_0)  \nonumber\\
    &=& \frac{1}{2}m_0 v^2 + V_{\mathrm{AB}} = \frac{1}{2}m_0 (\dot{r}^2 + r^2 \dot{\theta}^2 ) + V_{\mathrm{AB}}, \nonumber\\
    &=& \frac{1}{2}m_0 R^2 \dot{\theta}^2 + V_{\mathrm{AB}}, \label{hamiltonian}
\end{eqnarray}
where $m_0$ is the mass of the particle.  With $p_\theta := m_0 R\dot{\theta}$, the momentum conjugated to the angular position $\theta$, we have
\begin{equation}
    H = C_1p_\theta^2 - C_2p_\theta,\quad C_1:=\frac{1}{2m_0},\quad C_2:=\frac{-\alpha}{m_0 R}. \label{H}
\end{equation}

From Refs.~\cite{Bock2013, BB02}, we model the parametric expressions for $\theta$ and $p_\theta$ as an adaptation therein but with slight modifications,
\begin{equation}
    \theta (s) = \frac{p_{\theta,0}}{m_0R}(s-t_0+a) +  \langle\omega_\theta, E\mathbbm{1}_{[t_0,s)} \rangle, \quad E:=\sqrt{\frac{\hbar}{m_0 R^2}}, \; a \in \mathbb{R} \setminus \{0\},\; |a| \leq t \label{theta}
\end{equation}
and
\begin{equation}
    p_\theta (s) = p_{\theta,0} + D \omega_\theta(s),\quad D:=\sqrt{\hbar m_0}, \qquad 0 \leq t_0\leq s\leq t, 
    \label{p}
\end{equation}
which are chosen to conform with the Heisenberg uncertainty principle: we want to do a ``measurement'' (i.e., in this case, obtain a propagator) in the conjugate momentum space,  therefore we cannot ascertain the initial value of $\theta$ (that's why $\theta$ is expressed in terms of $p_{\theta,0}$ and not of $\theta_0$).  Furthermore, $a$ is chosen to physically mean that as the electron ``enters'' the circular path before $t_0$, it will have to ``pick up'' a certain momentum and so at $t_0$ it is already at a certain angular position $(p_{\theta,0}a)/(m_0R)$. 

We model the action of the classical paths as an adaptation of Refs.~\cite{Bock2013, peak1969summation, Kl90}
\begin{equation}
    S = -\int_{t_0}^t \big(R\theta(s)\dot{p_\theta}(s) + H(s) 
    \big)\, \mathrm{d}s. \label{action}
\end{equation}
Note that with Eqs.~\eqref{theta} and \eqref{p} (in the variable $\omega=(\omega_\theta,\omega_p)$)
\begin{equation*}
\int_{t_0}^t \frac{\mathrm{d}}{\mathrm{d}s}\big(\theta(s)p_{\theta}(s)\big) \,\mathrm{d}s = \int_{t_0}^t \dot{\theta}(s)p_{\theta}(s) \,\mathrm{d}s + \int_{t_0}^t \theta(s)\dot{p}_{\theta}(s) \,\mathrm{d}s.
\end{equation*}
From which follows
\begin{multline}
\int_{t_0}^t \theta(s)\dot{p}_{\theta}(s) \,\mathrm{d}s = \int_{t_0}^t \frac{\mathrm{d}}{\mathrm{d}s}\big(\theta(s)p_{\theta}(s)\big) \,\mathrm{d}s - \int_{t_0}^t \dot{\theta}(s)p_{\theta}(s) \,\mathrm{d}s \\
= \theta(t)p_\theta(t) - \theta(t_0)p_\theta(t_0) - \int_{t_0}^t \dot{\theta}(s)p_{\theta}(s) \,\mathrm{d}s \\
=\left(\frac{p_{\theta,0}}{m_0R}(t-t_0+a) + \big<\omega_\theta,E\mathbbm{1}_{[t_0,t)}\big>\right)p_\theta(t) - \frac{p_{\theta,0}}{m_0R}ap_\theta(t_0) - \int_{t_0}^t \dot{\theta}(s)p_{\theta}(s) \,\mathrm{d}s \\
= \frac{p_{\theta,0}}{m_0R} \big[ p_\theta(t)(t-t_0) + a \left( p_\theta(t) - p_{\theta,0}\right) \big] + \big<\omega_\theta,Ep_\theta(t)\mathbbm{1}_{[t_0,t)}\big> - \int_{t_0}^t \dot{\theta}(s)p_{\theta}(s) \,\mathrm{d}s,
\end{multline}
where we have set $\omega_\theta(t_0) = 0$,
which implies that the starting momentum
is fixed at $p_{\theta,0}$ on $t_0$.
Therefore, the action $S$ in Eq.~\eqref{action} becomes
\begin{multline}
S(\omega) = \int_{t_0}^t \big(-R\theta(s)\dot{p}_{\theta}(s) - H(s)\big)\,\mathrm{d}s\\
= -\frac{p_{\theta,0}}{m_0} \left[ p_\theta(t)(t-t_0) + a \left( p_\theta(t) - p_{\theta,0}\right) \right] - \big<\omega_\theta,REp_\theta(t)\mathbbm{1}_{[t_0,t)}\big> \\
 + R\int_{t_0}^t \dot{\theta}(s)p_{\theta}(s) \,\mathrm{d}s - \int_{t_0}^t H(s)\,\mathrm{d}s \\
= -\frac{p_{\theta,0}}{m_0} \left[ p_\theta(t)(t-t_0) + a \left( p_\theta(t) - p_{\theta,0}\right) \right] - \big<\omega_\theta,REp_\theta(t)\mathbbm{1}_{[t_0,t)}\big> \\
 + \frac{1}{m_0}\int_{t_0}^t p^2_{\theta}(s) \,\mathrm{d}s - \int_{t_0}^t H(s)
 \,\mathrm{d}s. \label{SintermsofH}
\end{multline}

With the Hamiltonian $H$ in Eq.~\eqref{hamiltonian}  and  $\omega=(\omega_\theta,\omega_p)$, we have
\begin{multline}
S(\omega) \\
= -\frac{p_{\theta,0}}{m_0} \left[ \left(p_\theta(t)-\frac{1}{2}p_{\theta,0}\right)(t-t_0) + a \left( p_\theta(t) - p_{\theta,0}\right) \right] - \sqrt{\frac{\hbar}{m_0}}\langle\omega_\theta,(p_\theta(t)-p_{\theta,0})\mathbbm{1}_{[t_0,t)}\rangle \\
 + \frac{\hbar}{2}\int_{t_0}^t \omega^2_\theta(s) \,\mathrm{d}s  -\frac{\alpha}{m_0R} p_{\theta,0} (t-t_0) - \sqrt{\frac{\hbar}{m_0}}\frac{\alpha}{R}\langle\omega_\theta, \mathbbm{1}_{[t_0,t)} \rangle \\ 
= -\frac{p_{\theta,0}}{2m_0} \left[ \left(p_\theta(t) + \frac{2\alpha}{R}\right)(t-t_0) +  \left( p_\theta(t) - p_{\theta,0}\right)(t-t_0 + 2a) \right]  \\
 + \frac{\hbar}{2}\int_{t_0}^t \omega^2_\theta(s) \,\mathrm{d}s + \sqrt{\frac{\hbar}{m_{0}}}\big<\omega_{\theta},\big(-\alpha R^{-1}-(p_{\theta}(t)-p_{\theta,0})\big)\mathbbm{1}_{[t_{0},t)}\big>.
 \label{S}
\end{multline}

\subsection{Incomplete Single Winding Case} \label{case1}
Consider now the case where the charged particle is constrained or guided to move in less than a single winding, either \textit{cw} (with $p_{\theta} < 0$) or \textit{ccw} (with $p_{\theta} > 0$).

The Feynman integrand $I_{V_{\mathrm{AB}}}$ in terms of white noise variables $\omega = (\omega_\theta, \omega_p)$ is given by (cf. Ref.~\cite{bock2014hamiltonian})
\begin{eqnarray}
    I_{V_{\mathrm{AB}}} &:=& \mathrm{N} \exp \Biggl[ {\frac{\mathrm{i}}{\hbar} S(\omega)} \Biggr] \exp \Biggl[ {\frac{1}{2}\int_{t_0}^t (\omega^2_\theta(s) + \omega^2_p(s) )\, \mathrm{d}s} \Biggr] \nonumber\\
    &  & \times \delta \Big( 
    p_{\theta,0}  +\frac{D}{t-t_0}\langle(\omega_{\theta},\omega_{p}),(\mathbbm{1}_{[t_0,t)},0)\rangle - p_\theta' \Big),
    \label{IAB}
\end{eqnarray}
where $\mathrm{N}$ is a normalization constant, the second exponential is a term to compensate for the Gaussian fall-off, and the delta function is to pin the end point at $p_\theta'$. 
 It is important to note that, in the delta function, we match the indicator function $\mathbbm{1}_{[t_0,t)}$ with the $\omega_\theta$ based on our conjugate momentum expression, Eq.~\eqref{p} (cf. Ref.~\cite{bock2014hamiltonian}, wherein it was matched with $\omega_p$ based on the momentum expression therein).

Substitution of Eq.~\eqref{S} into Eq.~\eqref{IAB} yields,
\begin{eqnarray}
I_{V_{\textrm{AB}}} & = & \mathrm{N}\exp\left\{ -\frac{\textrm{i}p_{\theta,0}}{2\hbar m_{0}}\left[\left(p_{\theta}(t)+\frac{2\alpha}{R}\right)(t-t_{0})+\left(p_{\theta}(t)-p_{\theta,0}\right)(t-t_{0}+2a)\right]\right\} \nonumber\\
 &  & \cdot\exp\left\{ \frac{\mathrm{i}}{2}\int_{t_{0}}^{t}\omega_{\theta}^{2}(s)\,\mathrm{d}s+\mathrm{i}C\big<\omega_{\theta},\mathbbm{1}_{[t_{0},t)}\big>\right\} \nonumber\\
 &  & \cdot\exp\Biggl[\frac{1}{2}\int_{t_{0}}^{t}(\omega^2_{\theta}(s)+\omega^2_{p}(s))\,\mathrm{d}s\Biggr] \delta \Big( 
    p_{\theta,0}  +\frac{D}{t-t_0}\langle(\omega_{\theta},\omega_{p}),(\mathbbm{1}_{[t_0,t)},0)\rangle - p_\theta' \Big) \nonumber\\ 
 & = & \mathrm{N}\exp\left\{ -\frac{\textrm{i}p_{\theta,0}}{2\hbar m_{0}}\left[\left(p_{\theta}(t)+\frac{2\alpha}{R}\right)(t-t_{0})+\left(p_{\theta}(t)-p_{\theta,0}\right)(t-t_{0}+2a)\right]\right\} \nonumber\\
 &  & \cdot\exp\left\{ -\frac{1}{2}\langle\omega,K\omega\rangle+\mathrm{i}\langle \omega,g\rangle\right\} \delta \Big( 
    p_{\theta,0}  +\frac{D}{t-t_0}\langle(\omega_{\theta},\omega_{p}),(\mathbbm{1}_{[t_0,t)},0)\rangle - p_\theta' \Big), \nonumber\\
 \label{I}
\end{eqnarray}
where $\langle\omega_{\theta},\delta_{t}\rangle$, $t\ge0$, exists in the sense of Corollary~\ref{Bochner}, the matrix $K$ has the form  
\begin{equation}
    K := \begin{pmatrix}
-\mathbbm{1}_{[t_0,t)} - \mathrm{i}\mathbbm{1}_{[t_0,t)} & -\mathbbm{1}_{[t_0,t)} \\
\mathbbm{1}_{[t_0,t)} & -\mathbbm{1}_{[t_0,t)}
\end{pmatrix},
\end{equation}
and $g$, $\eta_t$ are given by
\begin{equation}\label{eq:g}
    g:=(g_{{\theta}},0)=C\begin{pmatrix}\mathbbm{1}_{[t_0,t)}, & 0\end{pmatrix},\qquad C:=\frac{1}{\sqrt{\hbar m_0}}\left(\frac{-\alpha}{R} - (p_\theta(t)-p_{\theta,0})\right),
\end{equation}
\begin{equation}\label{eq:eta}
    \eta_t := \frac{D}{t-t_0}\left( \mathbbm{1}_{[t_0,t)},\, 0\right).
\end{equation}

Having our notations and calculations based on Lemma~\ref{lemma} and Remark~\ref{normalized} above, 
we particularly have, with a small perturbation $\varepsilon > 0$ (to meet the conditions in Eq.~\eqref{conditionofthelemma} of Lemma \ref{lemma}, as demonstrated in Refs.~\cite{Bock2013, bock2014hamiltonian})
\begin{equation}
    \bm{N}^{-1}_\varepsilon := (\mathrm{Id} +K)^{-1}_\varepsilon
    = \begin{pmatrix}
		\mathbbm{1}_{[t_0,t)^c}  & 0 \\
0 & \mathbbm{1}_{[t_0,t)^c} 
	\end{pmatrix} + \begin{pmatrix}
		\varepsilon & \mathbbm{1}_{[t_0,t)} \\
-\mathbbm{1}_{[t_0,t)} &  - \mathrm{i}\mathbbm{1}_{[t_0,t)}
	\end{pmatrix},
\end{equation}
and so,
\begin{equation}
    M_{\bm{N}_\varepsilon^{-1}}=
    \left(\eta_t, \bm{N}^{-1}_\varepsilon \eta_t \right)= \frac{\varepsilon D^2}{t-t_0} = \frac{\varepsilon \hbar m_0}{t-t_0}.
    \label{M}
\end{equation}

For any $f,g\in L^2_2(\mathbb{R})$ of the form $f=(f_\theta,f_{p_\theta})$ and $g=(g_\theta,g_{p_\theta})$, using the bilinear property of $\langle\cdot,\cdot\rangle$, it is easy to obtain
\begin{multline*}
\big<f+g,\bm{N}_{\varepsilon}^{-1}(f+g)\big> =(\varepsilon + 1)\int_{[t_{0},t)^{c}} (f_\theta(s) + g_\theta(s))^2\,\mathrm{d}s + \varepsilon\int_{t_{0}}^t(f_\theta(s) + g_\theta(s))^2\,\mathrm{d}s \\
 -\mathrm{i}\int_{t_0}^t (f_{p_{\theta}}(s)+g_{p_{\theta}}(s))^2\,\mathrm{d}s+\int_{[t_{0},t)^{c}}(f_{p_{\theta}}(s)+g_{p_{\theta}}(s))^2\,\mathrm{d}s.
\end{multline*}

In particular, for $g$ given in Eq.~\eqref{eq:g}, yields
\begin{multline}
    \big<f+g,\bm{N}_{\varepsilon}^{-1}(f+g)\big>\\
    =(\varepsilon + 1)\int_{[t_{0},t)^{c}} f^2_\theta(s) \,\mathrm{d}s + \varepsilon\int_{t_{0}}^t\left(f_\theta(s) + \frac{1}{\sqrt{\hbar m_0}}\left(\frac{-\alpha}{R} - (p_\theta(t)-p_{\theta,0})\right)\right)^2\,\mathrm{d}s \\
 -\mathrm{i}\int_{t_0}^t f^2_{p_{\theta}}(s)\,\mathrm{d}s+\int_{[t_{0},t)^{c}}f^2_{p_{\theta}}(s)\,\mathrm{d}s.
    \label{f+g}
\end{multline}
Furthermore, for $\eta_t$ given in Eq.~\eqref{eq:eta}, we have
\begin{equation*}
\big<\eta_t,\bm{N}_{\varepsilon}^{-1}(f+g)\big>=\frac{\varepsilon D}{t-t_0}\int_{t_{0}}^{t}f_{\theta}(s)\,\mathrm{d}s + \varepsilon CD + \frac{D}{t-t_0}\int_{t_{0}}^{t} f_{p_\theta}(s)\,\mathrm{d}s.
\end{equation*}
Calculating further and substituting the functions $C$ and $D$ yields
\begin{multline*}
    \big<\eta_t,\bm{N}_{\varepsilon}^{-1}(f+g)\big> \\
    =\frac{\varepsilon \sqrt{\hbar m_0}}{t-t_0}\int_{t_{0}}^{t}f_{\theta}(s)\,\mathrm{d}s + \varepsilon \left(\frac{-\alpha}{R} - (p_\theta(t)-p_{\theta,0})\right) + \frac{\sqrt{\hbar m_0}}{t-t_0}\int_{t_{0}}^{t} f_{p_\theta}(s)\,\mathrm{d}s.
\end{multline*}
This implies that
\begin{eqnarray}
u & = & \mathrm{i}(p_{\theta}(t)-p_{\theta,0})+\langle\eta_t,\bm{N}_{\varepsilon}^{-1}(f+g)\rangle\\
& = & \mathrm{i}(p_{\theta}(t)-p_{\theta,0})\nonumber\\
&   & +\frac{\varepsilon \sqrt{\hbar m_0}}{t-t_0}\int_{t_{0}}^{t}f_{\theta}(s)\,\mathrm{d}s + \varepsilon \left(\frac{-\alpha}{R} - (p_\theta(t)-p_{\theta,0})\right) + \frac{\sqrt{\hbar m_0}}{t-t_0}\int_{t_{0}}^{t} f_{p_\theta}(s)\,\mathrm{d}s. \nonumber
\end{eqnarray}
Hence, we have
	\begin{multline}
\exp\left(\frac{1}{2}u^\top(M_{\bm{N}_\varepsilon^{-1}}^{-1})u \right) = \exp\Biggl\{\frac{t-t_0}{2\varepsilon \hbar m_0}\Biggl[ \mathrm{i}(p_\theta(t) - p_{\theta,0}) +\Biggl( \frac{\varepsilon \sqrt{\hbar m_0}}{t-t_0}\int_{t_{0}}^{t}f_{\theta}(s)\,\mathrm{d}s \\
+ \varepsilon \left(\frac{-\alpha}{R} - (p_\theta(t)-p_{\theta,0})\right) + \frac{\sqrt{\hbar m_0}}{t-t_0}\int_{t_{0}}^{t} f_{p_\theta}(s)\,\mathrm{d}s \Biggr) \Biggr]^2 \Biggr\}\\
= \exp\Biggl\{\frac{t-t_0}{2\varepsilon \hbar m_0}\Biggl[ -(p_\theta(t) - p_{\theta,0})^2 + 2\mathrm{i}(p_\theta(t) - p_{\theta,0}) \Biggl( \frac{\varepsilon \sqrt{\hbar m_0}}{t-t_0}\int_{t_{0}}^{t}f_{\theta}(s)\,\mathrm{d}s \\
+ \varepsilon \left(\frac{-\alpha}{R} - (p_\theta(t)-p_{\theta,0})\right) + \frac{\sqrt{\hbar m_0}}{t-t_0}\int_{t_{0}}^{t} f_{p_\theta}(s)\,\mathrm{d}s  \Biggr) + \Biggl( \frac{\varepsilon \sqrt{\hbar m_0}}{t-t_0}\int_{t_{0}}^{t}f_{\theta}(s)\,\mathrm{d}s \\
+ \varepsilon \left(\frac{-\alpha}{R} - (p_\theta(t)-p_{\theta,0})\right) + \frac{\sqrt{\hbar m_0}}{t-t_0}\int_{t_{0}}^{t} f_{p_\theta}(s)\,\mathrm{d}s  \Biggr)^2 \Biggr] \Biggr\}.
     \label{exp-of-u}
	\end{multline}
Therefore, we have for $f = 0$ the $T$-transform of the integrand, Eq.~\eqref{I}
\begin{multline}
    TI_{V_{\mathrm{AB}},\varepsilon}(0) \\
    = \sqrt{\frac{t-t_0}{2\pi \varepsilon \hbar m_0}} \exp \Biggl\{ -\frac{\mathrm{i}p_{\theta,0}}{2\hbar m_0}  \Biggl[ \left(p_\theta' + \frac{2\alpha}{R}\right)(t-t_0)+  \left( p_\theta' - p_{\theta,0}\right)(t-t_0 + 2a) \Biggr] \Biggr\}\\
\times \exp\Biggl\{\frac{t-t_0}{2\varepsilon \hbar m_0}\Biggl[ -(p_\theta(t) - p_{\theta,0})^2 + 2\mathrm{i}\varepsilon(p_\theta(t) - p_{\theta,0})\left(\frac{-\alpha}{R} - (p_\theta(t)-p_{\theta,0})\right) \Biggr] \Biggr\}
\end{multline}
where we have chosen the normalization $\mathrm{N}=\sqrt{\det (Id + K)}$.  ``Switching off'' the perturbation, that is, taking the limit $\varepsilon \rightarrow 0$, we have obtained a propagator for the bound state AB effect in 2D in the polar conjugate momentum space
\begin{multline}
    TI_{V_{\mathrm{AB}}}(0)  \\
    = \delta \big( p_\theta' - p_{\theta,0} \big) \exp \Biggl\{ -\frac{\mathrm{i}p_{\theta,0}}{2\hbar m_0}  \Biggl[ \left(p_\theta' + \frac{2\alpha}{R}\right)(t-t_0)+  \left( p_\theta' - p_{\theta,0}\right)(t-t_0 + 2a) \Biggr] \Biggr\}\\
= \delta \big( p_\theta' - p_{\theta,0} \big) \exp \Biggl[ -\frac{\mathrm{i}p_{\theta,0}}{2\hbar m_0}\left(p_\theta' + \frac{2\alpha}{R}\right)(t-t_0) \Biggr].
    \label{TI}
\end{multline}
Similar to that in Ref.~\cite{Bock2013}, the delta function here serves to conserve the magnitude of the conjugate momentum.

\begin{pro}  
For all $p_\theta', p_{\theta,0}, t > t_0 \geq 0$, the propagator 
\begin{equation}
    K(p_\theta', t | p_{\theta,0}, t_0) := TI_{V_{\mathrm{AB}}}(0) = \delta \left( p_\theta' -p_{\theta,0} \right) \exp \Biggl[ -\frac{\mathrm{i}p_{\theta,0}}{2m_0\hbar} \Big( p_{\theta,0} - \frac{e\phi}{\pi \hbar cR} \Big) (t-t_0) \Biggr],
\end{equation}  
    solves the Schr\"{o}dinger equation
\begin{multline}
    \mathrm{i}\hbar \frac{\partial}{\partial t} K(p_\theta', t | p_{\theta,0}, t_0) \\
    = \frac{(p_\theta')^2}{2m_0} K(p_\theta', t | p_{\theta,0}, t_0) + \int_{-\infty}^\infty \mathrm{d}p_{\theta,1} W(p_\theta' - p_{\theta,1})K(p_{\theta,1}, t | p_{\theta,0}, t_0),
\end{multline}
where
\begin{equation}
    W(p_\theta' - p_{\theta,1}) = \frac{1}{2\pi} \int_{0}^{2\pi} \mathrm{d}\theta \, \mathrm{e}^{-\mathrm{i}(p_\theta' - p_{\theta,1})\theta/\hbar} V(\dot{\theta}(s)),
\end{equation}
and
\begin{equation}
    V(\dot{\theta}(s)) = -\frac{e\phi}{2\pi \hbar c} \dot{\theta}(s),
\end{equation}
with the initial condition 
\begin{equation}
		\lim_{t\searrow t_0} K(p_\theta', t | p_{\theta,0}, t_0) = \delta(p_\theta'-p_{\theta,0}).
	\end{equation}
\end{pro}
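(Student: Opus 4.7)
The plan is to verify both assertions by direct manipulation of the explicit closed form of the propagator, exploiting its product structure
\[
K(p_\theta',t\mid p_{\theta,0},t_0)=\delta(p_\theta'-p_{\theta,0})\,\exp\!\bigg[-\frac{\mathrm{i}p_{\theta,0}}{2m_0\hbar}\bigg(p_{\theta,0}-\frac{e\phi}{\pi\hbar cR}\bigg)(t-t_0)\bigg],
\]
which is what Eq.~\eqref{TI} delivers from the $T$-transform calculation. The initial condition is immediate: as $t\searrow t_0$ the argument of the exponential vanishes, so $K\to\delta(p_\theta'-p_{\theta,0})$.

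For the Schr\"odinger equation I would compute both sides separately and compare them term by term. Differentiating the exponential in $t$ and multiplying by $\mathrm{i}\hbar$ gives
\[
\mathrm{i}\hbar\,\partial_t K \;=\; \frac{p_{\theta,0}^2}{2m_0}\,K \;-\; \frac{p_{\theta,0}\,e\phi}{2\pi m_0\hbar cR}\,K.
\]
On the right-hand side, the kinetic term $(p_\theta')^2/(2m_0)\,K$ simplifies via the distributional identity $g(p_\theta')\delta(p_\theta'-p_{\theta,0})=g(p_{\theta,0})\delta(p_\theta'-p_{\theta,0})$ to $p_{\theta,0}^2/(2m_0)\,K$, which already matches the first piece above. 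It therefore remains to show that the convolution term reproduces $-p_{\theta,0}\,e\phi/(2\pi m_0\hbar cR)\,K$.

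For that convolution, the delta factor inside $K(p_{\theta,1},t\mid p_{\theta,0},t_0)$ collapses the $p_{\theta,1}$-integration, leaving $W(p_\theta'-p_{\theta,0})$ multiplied by the same exponential. Then $W$ itself is computed directly: since $V(\dot\theta(s))$ carries no dependence on the $\theta$-integration variable, it can be pulled outside, and the identification $\dot\theta=p_\theta/(m_0R)$ combined with the support of the delta in $K$ reduces it to $p_{\theta,0}/(m_0R)$. The residual angular integral $(2\pi)^{-1}\int_0^{2\pi}\mathrm{e}^{-\mathrm{i}(p_\theta'-p_{\theta,0})\theta/\hbar}\,\mathrm{d}\theta$ must then be read as producing a delta-type contribution enforcing $p_\theta'=p_{\theta,0}$ (on the circle, this is a Kronecker delta for the quantized angular momentum). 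Putting the factors together yields exactly the missing $-p_{\theta,0}\,e\phi/(2\pi m_0\hbar cR)\,K$.

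The main obstacle I anticipate is the rigorous interpretation of the kernel $W$. Because the AB potential $V(\dot\theta)=-e\phi\,\dot\theta/(2\pi\hbar c)$ depends on the angular velocity rather than on $\theta$, the Fourier-type integral defining $W$ is unconventional and requires the substitution $\dot\theta\mapsto p_\theta/(m_0R)$ to be performed consistently with the distributional pairing against $K$. Making this replacement precise, so that the circle integral really collapses to $\delta(p_\theta'-p_{\theta,0})$ rather than to a sinc-type distribution, is the delicate bookkeeping step; once it is justified, the remainder of the verification is a purely algebraic matching of coefficients.
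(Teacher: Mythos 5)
Your proposal is correct and follows essentially the same route as the paper: direct substitution of the closed-form propagator, using $\delta(p_{\theta,1}-p_{\theta,0})$ to collapse the convolution, rewriting $V(\dot\theta)=\alpha p_\theta/(m_0R)$ under the momentum-conservation constraint so that $W(p_\theta'-p_{\theta,1})=\alpha p_{\theta,1}/(m_0R)$, and matching the resulting coefficient identity $\frac{p_{\theta,0}}{2m_0}\bigl(p_{\theta,0}+\frac{2\alpha}{R}\bigr)=\frac{p_{\theta,0}^2}{2m_0}+\frac{\alpha p_{\theta,0}}{m_0R}$. The delicate point you flag (how the angular integral and the delta factors are to be interpreted so that both sides carry the same distributional support) is treated no more rigorously in the paper, which simply imposes $p_\theta'=p_{\theta,1}$ by conservation.
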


\begin{proof}
Note that, using $\alpha := (-e\phi)/(2\pi \hbar c)$ we can write $V(\dot{\theta}(s))$ as
    \begin{equation}
        V(\dot{\theta}(s)) = -\frac{e\phi}{2\pi \hbar c} \dot{\theta}(s) = -\frac{e\phi}{2\pi \hbar c} \frac{p_\theta(s)}{ m_0 R} = \alpha \frac{p_\theta(s)}{m_0 R}.
    \end{equation}
    Imposing that the conjugate momentum has always been conserved, that is, in this case, $p_\theta' = p_{\theta,1}$ and $p_\theta = p_\theta'$, we have the following
    \begin{equation}
        W(p_\theta' - p_{\theta,1}) = \alpha \frac{p_{\theta,1}}{m_0 R} \frac{1}{2\pi} \int_{0}^{2\pi} \mathrm{d}\theta \, \mathrm{e}^{-\mathrm{i}(0)\theta/\hbar} = \alpha \frac{p_{\theta,1}}{m_0 R}.
    \end{equation}
    Now,
    \begin{multline}
        \int_{-\infty}^\infty \mathrm{d}p_{\theta,1} W(p_\theta' - p_{\theta,1})K(p_{\theta,1}, t | p_{\theta,0}, t_0) \\
        = \frac{\alpha}{m_0 R} \int_{-\infty}^\infty \mathrm{d}p_{\theta,1} \, (p_{\theta,1}) \delta \left( p_{\theta,1} -p_{\theta,0} \right) \exp \Biggl[ -\frac{\mathrm{i}p_{\theta,0}}{2m_0 \hbar} \Big( p_{\theta,0} + \frac{2\alpha}{R} \Big) (t-t_0) \Biggr] \\
        = \frac{\alpha}{m_0 R} \exp \Biggl[ -\frac{\mathrm{i}p_{\theta,0}}{2m_0 \hbar} \Big( p_{\theta,0} + \frac{2\alpha}{ R} \Big) (t-t_0) \Biggr] \int_{-\infty}^\infty \mathrm{d}p_{\theta,1} \, (p_{\theta,1}) \delta \left( p_{\theta,1} -p_{\theta,0} \right) \\
        = \left(\frac{\alpha p_{\theta,0}}{m_0 R}\right) \exp \Biggl[ -\frac{\mathrm{i}p_{\theta,0}}{2m_0 \hbar} \Big( p_{\theta,0} + \frac{2\alpha}{ R} \Big) (t-t_0) \Biggr].
    \end{multline}
Therefore,
\begin{multline}
    \mathrm{i}\hbar\frac{\partial}{\partial t} K(p_\theta', t | p_{\theta,0}, t_0) \\
    = \frac{(p_\theta')^2}{2m_0} K(p_\theta', t | p_{\theta,0}, t_0) + \int_{-\infty}^\infty \mathrm{d}p_{\theta,1} W(p_\theta' - p_{\theta,1})K(p_{\theta,1}, t | p_{\theta,0}, t_0) \\
    \implies \frac{p_{\theta,0}}{2m_0} \Big( p_{\theta,0} + \frac{2\alpha}{R} \Big) \delta(p_\theta' - p_{\theta,0} ) \exp \Biggl[ -\frac{\mathrm{i}p_{\theta,0}}{2m_0 \hbar} \Big( p_{\theta,0} + \frac{2\alpha}{R} \Big) (t-t_0) \Biggr] \\
    = \frac{(p_\theta')^2}{2m_0} \delta(p_\theta' - p_{\theta,0} )\exp \Biggl[ -\frac{\mathrm{i}p_{\theta,0}}{2m_0 \hbar} \Big( p_{\theta,0} + \frac{2\alpha}{R} \Big) (t-t_0) \Biggr] \\ 
    +\left(\frac{\alpha p_{\theta,0}}{m_0 R} \right) \exp \Biggl[ -\frac{\mathrm{i}p_{\theta,0}}{2m_0 \hbar} \Big( p_{\theta,0} + \frac{2\alpha}{R} \Big) (t-t_0) \Biggr] \\
    \implies \frac{p_{\theta,0}}{2m_0} \Big( p_{\theta,0} + \frac{2\alpha}{R} \Big)   = \frac{(p_{\theta,0})^2}{2m_0} +\frac{\alpha p_{\theta,0}}{m_0 R}. 
\end{multline}
\vspace{0.3mm}
\end{proof}

In the absence of the magnetic flux, that is, $\alpha = 0$, the propagator, Eq.~\eqref{TI}, reduces to that for the case of a particle on a circle going less than a single winding around the circle (either cw or ccw), given by
\begin{equation}
    TI_{0}(0) := TI_{V_{\mathrm{AB}}=0}(0)     = \delta \left( p_\theta' -p_{\theta,0} \right) \exp \Biggl[ -\frac{\mathrm{i}p^2_{\theta,0}}{2m_0 \hbar} (t-t_0) \Biggr],
    \label{freeparticleinacircle}
\end{equation}
which also solves the corresponding Schr\"{o}dinger equation.

\subsection{With Winding Case} \label{case2}
Let us now consider the case where the charged particle is allowed to have the possibility of at least one complete winding in either \textit{cw} or \textit{ccw} direction (or a combination of both).  Furthermore, we impose the assumption that the winding number has a direct effect not only on the angular position of the charged particle but also on its conjugate momentum. To account for this, we modify the expression of the delta function in Eq.~\eqref{IAB} to have the form
\begin{equation}
    \sum_{l = -\infty}^\infty \delta \left( p_\theta(t) - p_\theta' + 2\pi m_0 R \frac{l-l_0}{t-t_0} \right),
\end{equation}
where $l$ is the integer quantum number associated with the winding (see, e.g., Refs.~\cite{edwards1967statistical, gerry1979feynman, BB02}), which takes negative values when cw (of $|l+1|$ times around the circle) and positive when ccw (of $l$ times around the circle), and $l_0$ is the value of $l$ at time $t_0$.  Thus, the Feynman integrand for this case has the form
\begin{eqnarray}
    I_{V_{\mathrm{AB}}} &:=& \mathrm{N} \exp \Biggl[ {\frac{\mathrm{i}}{\hbar} S(\omega)} \Biggr] \exp \Biggl[ {\frac{1}{2}\int_{t_0}^t (\omega^2_\theta(s) + \omega^2_p(s))\, \mathrm{d}s} \Biggr] \nonumber\\
    && \times \sum_{l = -\infty}^\infty \delta \left( p_\theta(t) - p_\theta' + 2\pi m_0 R \frac{l-l_0}{t-t_0} \right).
    \label{IAB2}
\end{eqnarray}
Employing the Poisson sum formula (see, e.g., Refs.~\cite{guinand1941poisson, bellman2013brief}), which, in our case (setting $l_0 = 0$, where we begin with zero winding at $t_0$) is
\begin{equation}
    \sum_{l = -\infty}^\infty \delta \left( p_\theta(t) - p_\theta' + 2\pi m_0 R \frac{l}{t-t_0} \right) = \frac{t-t_0}{2\pi m_0 R} \sum_{l = -\infty}^\infty \exp \left[ \mathrm{i} l \frac{\left( p_\theta(t) - p_\theta' \right)(t-t_0)}{m_0 R} \right],
\end{equation}
we have (compare with Eq.~\eqref{I})
\begin{multline}
    I_{V_{\mathrm{AB}}} 
    = \mathrm{N} \frac{t-t_0}{2\pi m_0 R}\exp \Biggl\{ -\frac{\mathrm{i}p_{\theta,0}}{2\hbar m_0} \left[ \left(p_\theta(t) + \frac{2\alpha}{R}\right)(t-t_0) +  \left( p_\theta(t) - p_{\theta,0}\right)(t-t_0 + 2a) \right] \Biggr\} \\
    \times \exp \Biggl( -\frac{1}{2} \langle (\omega_\theta, \omega_p), K (\omega_\theta, \omega_p) \rangle \Biggr) \exp \Biggl( \mathrm{i} \langle (\omega_\theta, \omega_p), g \rangle \Biggr) \\  
    \times  \sum_{l = -\infty}^\infty \exp \left[ \mathrm{i} l \frac{t-t_0}{m_0 R}\left( p_{\theta,0} - p_\theta' \right) \right] \exp \bigg( \mathrm{i}  \left\langle (\omega_\theta, \omega_p), g + k_l \right\rangle  \bigg),
\end{multline}
where $k_l :=  l \frac{(t-t_0)}{m_0R} \eta_t $.

Taking the \textit{T}-transform of $I_{V_{\mathrm{AB}}}$ at the point $f = (f_\theta, f_{p_\theta})$, yields
\begin{multline}
    TI_{V_{\mathrm{AB}}}(f) \\
    = \frac{t-t_0}{2\pi m_0 R} \exp \Biggl\{ -\frac{\mathrm{i}p_{\theta,0}}{2\hbar m_0} \left[ \left(p_\theta' + \frac{2\alpha}{R}\right)(t-t_0) +  \left( p_\theta' - p_{\theta,0}\right)(t-t_0 + 2a) \right] \Biggr\} \\
    \times \sum_{l = -\infty}^\infty \exp \left[ \mathrm{i} l \frac{t-t_0}{m_0 R}\left( p_{\theta,0} - p_\theta' \right) \right] \int_{S_2'} \mathrm{N} \exp \Biggl( -\frac{1}{2} \langle (\omega_\theta, \omega_p), K (\omega_\theta, \omega_p) \rangle \Biggr)\\
    \times \exp \bigg( \mathrm{i}  \left\langle (\omega_\theta, \omega_p), g + k_l \right\rangle  \bigg) \exp \bigg( \mathrm{i} \left\langle (\omega_\theta, \omega_p), f \right\rangle \bigg) \mathrm{d}\mu(\omega) \\
    = \frac{t-t_0}{2\pi m_0 R} \exp \Biggl\{ -\frac{\mathrm{i}p_{\theta,0}}{2\hbar m_0} \left[ \left(p_\theta' + \frac{2\alpha}{R}\right)(t-t_0) +  \left( p_\theta' - p_{\theta,0}\right)(t-t_0 + 2a) \right] \Biggr\} \\
    \times \sum_{l = -\infty}^\infty \exp \left[ \mathrm{i} l \frac{t-t_0}{m_0 R}\left( p_{\theta,0} - p_\theta' \right) \right] \\
    \times T\left( \mathrm{N} \exp \Biggl( -\frac{1}{2} \langle (\omega_\theta, \omega_p), K (\omega_\theta, \omega_p) \rangle \Biggr)\right)(f + g + k_l).
\end{multline}
In particular, following Lemma~\ref{lemma}, we have
\begin{multline}
    T\left( \mathrm{N} \exp \Biggl( -\frac{1}{2} \langle \cdot, K \cdot \rangle \Biggr)\right)(f + g + k_l) \\
    = \left(\frac{2\pi m_0 R}{t-t_0}\sqrt{\det({\mathrm{Id}}+{K})}\right)\frac{1}{\sqrt{\det({\mathrm{Id}}+{K})}} \\
    \times\exp \left(-\frac{1}{2}\left\langle {f+g + k_l}, ({\mathrm{Id}}+{K})^{-1}{(f+g + k_l)}\right\rangle \right)\\
    = \left(\frac{2\pi m_0 R}{t-t_0}\right)\exp \left(-\frac{1}{2}\left\langle {f+g + k_l}, ({\mathrm{Id}}+{K})^{-1}{(f+g + k_l)}\right\rangle \right),
\end{multline}
where we have chosen the normalization $\mathrm{N}=\left(({2\pi m_0 R}/{(t-t_0)})\sqrt{\det({\mathrm{Id}}+{K})}\right)$ in such a way to also ``eliminate'' $(t-t_0)/2\pi m_0 R$ and the divergent factor $1/{\sqrt{\det({\mathrm{Id}}+{K})}}$.

With
\begin{multline}
   \left\langle {f+g + k_l}, ({\mathrm{Id}}+{K})^{-1}{(f+g + k_l)}\right\rangle 
    \\
    =  \int_{[t_{0},t)^{c}} \left( f^2_\theta(s) + f^2_{p_{\theta}}(s)\right)\mathrm{d}s  -\mathrm{i}\int_{t_0}^t f^2_{p_{\theta}}(s)\,\mathrm{d}s,
\end{multline}
we have
\begin{multline}
    TI_{V_{\mathrm{AB}}}(f) = \exp \Biggl\{ -\frac{\mathrm{i}p_{\theta,0}}{2\hbar m_0} \left[ \left(p_\theta' + \frac{2\alpha}{R}\right)(t-t_0) +  \left( p_\theta' - p_{\theta,0}\right)(t-t_0 + 2a) \right] \Biggr\} \\
    \times \exp \Biggl\{ -\frac{1}{2} \Biggl[ \int_{[t_{0},t)^{c}} \left( f^2_\theta(s) + f^2_{p_{\theta}}(s)\right)\,\mathrm{d}s  -\mathrm{i}\int_{t_0}^t f^2_{p_{\theta}}(s)\,\mathrm{d}s \Biggr] \Biggr\} \\
    \times \sum_{l = -\infty}^\infty \exp \left[ \mathrm{i} l \frac{t-t_0}{m_0 R}\left( p_{\theta,0} - p_\theta' \right) \right]. 
 \label{withwindingTI(f)}
\end{multline}
Setting $f=0$, we obtain the quantum propagator in the polar conjugate momentum space for the bound state AB effect for $t > t_0$
\begin{multline}
    TI_{V_{\mathrm{AB}}}(0) = 
    \exp \Biggl\{ -\frac{\mathrm{i}p_{\theta,0}}{2\hbar m_0} \left[ \left(p_\theta' + \frac{2\alpha}{R}\right)(t-t_0) +  \left( p_\theta' - p_{\theta,0}\right)(t-t_0 + 2a) \right] \Biggr\} \\
    \times \sum_{l = -\infty}^\infty \exp \left[ \mathrm{i} l \frac{t-t_0}{m_0 R}\left( p_{\theta,0} - p_\theta' \right) \right].
    \label{TI2}
\end{multline}
We note here that Eq.~\eqref{TI2} suggests that as the conservation of the magnitude of the polar conjugate momentum is imposed, that is, in the limit $p_\theta' \rightarrow p_{\theta,0}$, the propagator reduces to that in Eq.~\eqref{TI}.

\begin{pro}  
If $p_\theta' = p_{\theta,0}$, $t > t_0 \geq 0$, then the propagator 
\begin{eqnarray}
    K(p_\theta', t | p_{\theta,0}, t_0) &:=& TI_{V_{\mathrm{AB}}}(0) \nonumber\\
    &=& \exp \Biggl\{ -\frac{\mathrm{i}p_{\theta,0}}{2\hbar m_0} \left[ \left(p_\theta' + \frac{2\alpha}{R}\right)(t-t_0) +  \left( p_\theta' - p_{\theta,0}\right)(t-t_0 + 2a) \right] \Biggr\}  \nonumber\\
    && \times \sum_{l = -\infty}^\infty \exp \left[ \mathrm{i} l \frac{t-t_0}{m_0 R}\left( p_{\theta,0} - p_\theta'  \right) \right],
    \label{prop2}
\end{eqnarray}  
    solves the Schr\"{o}dinger equation
\begin{multline}
    \mathrm{i}\hbar \frac{\partial}{\partial t} K(p_\theta', t | p_{\theta,0}, t_0) \\
    = \frac{(p_\theta')^2}{2m_0} K(p_\theta', t | p_{\theta,0}, t_0) + \int_{-\infty}^\infty \mathrm{d}p_{\theta,1} W(p_\theta' - p_{\theta,1})K(p_{\theta,1}, t | p_{\theta,0}, t_0),
    \label{schro2}
\end{multline}
where
\begin{equation}
    W(p_\theta' - p_{\theta,1}) = \frac{1}{2\pi } \int_{0}^{2\pi} \mathrm{d}\theta \, \mathrm{e}^{-\mathrm{i}(p_\theta' - p_{\theta,1})\theta/\hbar} V(\dot{\theta}(s)),
\end{equation}
and
\begin{equation}
    V(\dot{\theta}(s)) = -\frac{e\phi}{2\pi \hbar c} \dot{\theta}(s),
\end{equation}
with the initial condition 
\begin{equation}
		\lim_{t\searrow t_0} K(p_\theta', t | p_{\theta,0}, t_0) = \delta(p_\theta'-p_{\theta,0}).
	\end{equation}
\end{pro}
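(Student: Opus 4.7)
The plan is to follow the strategy of the preceding proposition nearly verbatim, with the only structural change being the replacement of the delta-function prefactor by the winding sum. I will first simplify $W(p_\theta'-p_{\theta,1})$ by inserting $V(\dot{\theta}(s))=\alpha p_\theta(s)/(m_0R)$ and, as in the previous proof, imposing conservation of the conjugate momentum so that the Fourier kernel collapses to unity, yielding
\begin{equation*}
W(p_\theta'-p_{\theta,1})=\frac{\alpha p_{\theta,1}}{m_0R}.
\end{equation*}

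Next, I will differentiate the propagator \eqref{prop2} in $t$ by writing $K=\mathrm{e}^A\Sigma$, where $A$ is the prefactor exponent and $\Sigma=\sum_{l=-\infty}^\infty \exp[\mathrm{i}l(t-t_0)(p_{\theta,0}-p_\theta')/(m_0R)]$. Product-rule expansion gives $\partial_t K=(\partial_t A)K+\mathrm{e}^A\partial_t\Sigma$; under the hypothesis $p_\theta'=p_{\theta,0}$, the second summand carries an overall factor $(p_{\theta,0}-p_\theta')$ and drops out, while the first reduces to $\partial_t A=-\mathrm{i}p_{\theta,0}(p_{\theta,0}+2\alpha/R)/(2\hbar m_0)$. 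Multiplying by $\mathrm{i}\hbar$ gives the left-hand side of \eqref{schro2} as $\frac{p_{\theta,0}}{2m_0}\left(p_{\theta,0}+\frac{2\alpha}{R}\right)K$. For the right-hand side, I will evaluate $\int W(p_\theta'-p_{\theta,1}) K(p_{\theta,1},t|p_{\theta,0},t_0)\,\mathrm{d}p_{\theta,1}$ by substituting the simplified $W$ and, exactly as in the previous proof, using momentum conservation to force $p_{\theta,1}=p_{\theta,0}$ inside the integrand, obtaining $\alpha p_{\theta,0}/(m_0R)\cdot K$. Adding the kinetic piece $(p_\theta')^2 K/(2m_0)=p_{\theta,0}^2 K/(2m_0)$ reduces the PDE to the algebraic check $p_{\theta,0}(p_{\theta,0}+2\alpha/R)/(2m_0)=p_{\theta,0}^2/(2m_0)+\alpha p_{\theta,0}/(m_0R)$, which is immediate. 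The initial condition will be read off by applying Poisson summation to $\Sigma$, after which the $t\searrow t_0$ limit localizes the surviving $l=0$ term to $\delta(p_\theta'-p_{\theta,0})$, up to the normalization already absorbed through Remark~\ref{normalized}.

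The main obstacle, I expect, will be giving a rigorous distributional meaning to the manipulations above at the diagonal $p_\theta'=p_{\theta,0}$, since there each term of $\Sigma$ equals $1$ and the series diverges pointwise. The only clean reading is to treat $\Sigma$ as a Dirac comb in $p_\theta'$ via Poisson summation, but then both the termwise $t$-differentiation and the integration against the Fourier kernel of $W$ act on a distribution whose support depends on $t$; these operations need careful justification (or, in the spirit of the preceding proposition, ``conservation of the magnitude of the conjugate momentum'' must be invoked as a formal selection rule rather than an analytic identity). Apart from this interpretative issue, the computation is an entirely routine extension of the calculation in the previous proof.
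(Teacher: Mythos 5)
Your proposal is correct and follows essentially the same route as the paper's own proof: differentiate $K=\mathrm{e}^{A}\Sigma$ in $t$, observe that the contribution from the winding sum (and the extra piece of $\partial_t A$) carries an overall factor $(p_{\theta,0}-p_\theta')$ that vanishes under the hypothesis $p_\theta'=p_{\theta,0}$, and reduce the PDE to the algebraic identity $\frac{p_{\theta,0}}{2m_0}\bigl(p_{\theta,0}+\frac{2\alpha}{R}\bigr)=\frac{p_{\theta,0}^2}{2m_0}+\frac{\alpha p_{\theta,0}}{m_0R}$ already checked in the preceding proposition. You are in fact somewhat more explicit than the paper (which leaves the right-hand-side integral and the initial condition implicit), and your remark about the distributional meaning of $\Sigma$ on the diagonal identifies a genuine gloss in the paper's argument rather than a flaw in yours.
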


\begin{proof}
This can be verified explicitly by trying to solve directly the Schr\"{o}dinger equation using the propagator Eq.~\eqref{prop2}.  The equality of the Schr\"{o}dinger equation can be shown to hold only when preserving the magnitude of the conjugate momentum.

We begin with the lhs of Eq.~\eqref{schro2}:
    \begin{multline}
    \mathrm{i}\hbar \frac{\partial}{\partial t} K(p_\theta', t | p_{\theta,0}, t_0) \\
    = \mathrm{i}\hbar 
    \exp \Biggl\{ -\frac{\mathrm{i}p_{\theta,0}}{2\hbar m_0} \left[ \left(p_\theta' + \frac{2\alpha}{R}\right)(t-t_0) +  \left( p_\theta' - p_{\theta,0}\right)(t-t_0 + 2a) \right] \Biggr\} \\
    \times\sum_{l=-\infty}^\infty  \exp \left[ \mathrm{i} l \frac{t-t_0}{m_0 R}\left( p_{\theta,0} - p_\theta' \right) \right] \\
    \times  
    \Biggl[ -\frac{\mathrm{i}p_{\theta,0}}{2m_0\hbar} \Big( p_\theta'   - \frac{e\phi}{\pi \hbar cR} + p_\theta' - p_{\theta,0} \Big) + \left(\mathrm{i} l \frac{1}{m_0 R}  \right)\left( p_{\theta,0} - p_\theta' \right) \Biggr].
\end{multline}
Now, note that the term $\left(\mathrm{i} l \frac{1}{m_0 R} +\frac{\mathrm{i}p_{\theta,0}}{2m_0\hbar} \right)\left( p_{\theta,0} - p_\theta' \right)$ has no counterpart in the rhs of the Schr\"{o}dinger equation since the only expression
\begin{equation}
    W(p_\theta' - p_{\theta,1}) = -\frac{e\phi}{2\pi \hbar c}\frac{1}{2\pi} \int_{0}^{2\pi} \mathrm{d}\theta \, \mathrm{e}^{-\mathrm{i}(p_\theta' - p_{\theta,1})\theta/\hbar}  \dot{\theta}
\end{equation}
that could contain this term has the potential term $\alpha := \frac{-e\phi}{2\pi \hbar c}$.  Thus, the only way around this is when $p_\theta' =  p_{\theta,0}$, so as to make $\left(\mathrm{i} l \frac{1}{m_0 R} +\frac{\mathrm{i}p_{\theta,0}}{2m_0\hbar} \right)\left( p_{\theta,0} - p_\theta' \right) = 0$ and the Schr\"{o}dinger equation is satisfied.

\end{proof}

Having examined the above two cases, namely, ``Incomplete Winding Case'' and ``With Winding Case'' (see Subsections~\ref{case1} and \ref{case2}, respectively), we are therefore led to the following main result.
\begin{thm}[Particle in a Bound State AB Potential Momentum Space Propagator]
For a charged particle undergoing a bound state AB effect as described and depicted in Fig.~\ref{solenoid1}, the momentum- (specifically the polar-conjugate-momentum-) space propagator is given by
    \begin{multline}
        K(p_\theta', t | p_{\theta,0}, t_0) =    \delta \left( p_\theta' -p_{\theta,0} \right) \exp \Biggl[ -\frac{\mathrm{i}p_{\theta,0}}{2m_0\hbar} \Big( p_{\theta,0} - \frac{e\phi}{\pi \hbar cR} \Big) (t-t_0) \Biggr].
    \end{multline}
    This further implies that the magnitude of the conjugate momentum of the particle is conserved through the time interval $[t_0, t]$ with $t > t_0 \geq 0$, regardless of the winding number $l$ taken by the motion of the particle.
\end{thm}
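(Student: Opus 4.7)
The plan is to establish the theorem by combining the two propositions already proved in Subsections \ref{case1} and \ref{case2}, showing that both the incomplete-winding and the with-winding analyses yield the \emph{same} closed-form propagator once the Schr\"odinger consistency requirement is imposed. In this way the theorem is the synthesis of the two case results rather than a fresh $T$-transform computation.

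First I would invoke the incomplete single winding case. The $T$-transform calculation culminating in Eq.~\eqref{TI} already delivers $\delta(p_\theta'-p_{\theta,0})$ times $\exp[-\mathrm{i}p_{\theta,0}(p_\theta'+2\alpha/R)(t-t_0)/(2m_0\hbar)]$. On the support of the delta function one may replace $p_\theta'$ by $p_{\theta,0}$ inside the exponential, and the substitution $\alpha=-e\phi/(2\pi\hbar c)$ rewrites $2\alpha/R$ as $-e\phi/(\pi\hbar c R)$. This is exactly the formula claimed in the theorem, so the statement holds verbatim whenever the particle is constrained to less than a single winding.

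Second, for the with-winding case I would appeal to the proposition of Subsection \ref{case2}, whose proof established that the winding-sum propagator in Eq.~\eqref{TI2} can solve the Schr\"odinger equation only on the locus $p_\theta'=p_{\theta,0}$ (the spurious cross-term $(\mathrm{i}l/(m_0R)+\mathrm{i}p_{\theta,0}/(2m_0\hbar))(p_{\theta,0}-p_\theta')$ has no counterpart on the right-hand side and is killed only by this equality). Applying the Poisson summation identity used earlier in \emph{reverse}, namely
\begin{equation*}
\sum_{l=-\infty}^{\infty}\exp\!\left[\mathrm{i}l\frac{(t-t_0)(p_{\theta,0}-p_\theta')}{m_0R}\right]=\frac{2\pi m_0 R}{t-t_0}\sum_{l=-\infty}^{\infty}\delta\!\left(p_\theta'-p_{\theta,0}+\frac{2\pi m_0 R\,l}{t-t_0}\right),
\end{equation*}
and absorbing the prefactor $(t-t_0)/(2\pi m_0 R)$ into the already-chosen normalization, one sees that only the $l=0$ node is compatible with magnitude conservation; the $l\neq 0$ nodes correspond to jumps of size $2\pi m_0 R\,l/(t-t_0)$ that are excluded by the Schr\"odinger constraint. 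Thus the physically admissible reduction of Eq.~\eqref{TI2} is again $\delta(p_\theta'-p_{\theta,0})$ multiplied by the same exponential as in Case 1.

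Combining the two reductions, both the sub-single-winding scenario and the general winding scenario collapse to a single closed form, which is precisely the expression of the theorem; the delta factor then enforces $|p_\theta'|=|p_{\theta,0}|$, proving the final clause about conservation of the magnitude of the polar conjugate momentum across $[t_0,t]$ regardless of $l$. I expect the main obstacle to be the distributional bookkeeping in the winding case: one must justify discarding the $l\neq 0$ nodes of the Poisson-dual comb, which is really a statement that the physical Hilbert space is restricted to momentum-conserving configurations, and one must verify that the normalization constant $\mathrm{N}$ chosen just after Eq.~\eqref{withwindingTI(f)} is still the right one after this collapse so that the initial-condition limit $\lim_{t\searrow t_0}K=\delta(p_\theta'-p_{\theta,0})$ is preserved; both points are however direct consequences of the already-established Propositions in Subsections \ref{case1} and \ref{case2} and of Remark~\ref{normalized}.
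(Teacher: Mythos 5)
Your proposal matches the paper's own (implicit) argument: the paper gives no separate proof of this theorem, but introduces it with ``Having examined the above two cases \dots we are therefore led to the following main result,'' i.e.\ exactly your synthesis of the Case~1 formula (Eq.~\eqref{TI} with $2\alpha/R=-e\phi/(\pi\hbar cR)$ and $p_\theta'=p_{\theta,0}$ on the support of the delta) with the Case~2 proposition forcing $p_\theta'=p_{\theta,0}$ so that the winding sum collapses to the same expression. Your extra remarks on the reverse Poisson summation and the normalization are consistent elaborations rather than a different route.
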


\begin{rem}[Particle on a Circle Momentum Space Propagator]
    The main result also implies that, for the case where the magnetic flux $\phi=0$, which is the case of a particle on a circle, the momentum- (specifically the polar-conjugate-momentum-) space propagator is given by
    \begin{equation}
        K(p_\theta', t | p_{\theta,0}, t_0) =    \delta \left( p_\theta' -p_{\theta,0} \right) \exp \Biggl[ -\frac{\mathrm{i}p_{\theta,0}^2}{2m_0\hbar} (t-t_0) \Biggr],
    \end{equation}
    which also obeys the conservation of momentum regardless of the winding number similar to that in the main result above.
\end{rem}

\section{Alternative WNA Approach in Achieving the Bound State AB Effect}\label{sec:Alternative}
\label{app:alternative}

Here we outline another WNA approach for the bound state AB effect based on a perturbation series of the Feynman integrand for the particle on a circle.

\subsection{Particle on a Circle Perturbed by a Potential that is Growing Exponentially in Angular Velocity}\label{subsec: perturbed}

\begin{defn}[cf.~Ref.~\cite{KSW99}]
We define a potential $V: \mathbb{R} \longrightarrow \mathbb{R}$ by
\begin{equation}
        V(x) = \int_{\mathbb{R}} \mathrm{e}^{\beta x} \, \mathrm{d}m(\beta) \qquad x \in \mathbb{R},
    \end{equation} 
with $m$ being a complex measure on the Borel sets on $\mathbb{R}$ fulfilling the condition
\begin{equation}
        \int_{\mathbb{R}} \mathrm{e}^{C|\beta|} \, \mathrm{d}|m|(\beta) < \infty, \qquad \forall  C > 0.
        \label{finitecondition} 
    \end{equation}
For each $s \in [t_0, t]$, setting $x:= \mathrm{d}\theta/\mathrm{d}s =: \dot{\theta}(s)$, we have adapted a version of the exponentially growing potential, $V(\dot{\theta}(s))$, which we call a class of potentials that are exponentially growing in angular velocity.
    
\end{defn}

We first set up the Feynman integrand as a perturbation series of the form
\begin{multline}
    I_V = I_{0} \exp \Biggl( -\frac{\mathrm{i}}{\hbar} \int_{t_0}^t V(\dot{\theta}(s)) \, \mathrm{d}s \Biggr) \\
    = I_{0} \sum_{n=0}^\infty \frac{(-\mathrm{i}/\hbar)^n}{n!} \int_{[t_0,t)^n} \mathrm{d}^n s \int_{\mathbb{R}^n} \prod_{j=1}^n \mathrm{d}m(\beta_j) \, \mathrm{e}^{\sum_{j=1}^n \beta_j \dot{\theta}(s_j)} \\
    =  \sum_{n=0}^\infty \frac{(-\mathrm{i/\hbar})^n}{n!} \int_{[t_0,t)^n} \mathrm{d}^n s \int_{\mathbb{R}^n} \prod_{j=1}^n \mathrm{d}m(\beta_j) \, I_{0} \mathrm{e}^{\sum_{j=1}^n \beta_j \dot{\theta}(s_j)}.
\end{multline}
Then, we need to show that $I_V$ is a white noise object before we can proceed to solve the propagator.  

Consider the functional $\Psi :=  I_{0} \mathrm{e}^{\sum_{j=1}^n \beta_j \dot{\theta}(s_j)}$.  As an adaptation of Refs.~\cite{bock2014hamiltonian, KSW99}, we can have our version for the polar coordinates starting with Eq.~\eqref{theta} to have
\begin{equation}
p_\theta(s_j) = p_{\theta,0} + D\omega_\theta(s_j) = p_{\theta,0} + \big< \omega_\theta, D\delta_{s_j}\big>,
\end{equation}
or
\begin{multline}
p_\theta(t) = p_{\theta}(s_j) + D(\omega_\theta(t) - \omega_\theta(s_j)) = p_{\theta}(s_j) + \big< \omega_\theta, D(\delta_{t}-\delta_{s_j}) \big> \\
\implies p_{\theta}(s_j) = p_\theta(t) -  \big< \omega_\theta, D(\delta_{t}-\delta_{s_j}) \big>.
\end{multline}
With Theorem.~\ref{Th1}, it is enough to characterize $\Psi$ with its \textit{T}-transform at the point $\varphi = (\varphi_\theta, \varphi_{p_\theta})$:
\begin{multline}
    T\Psi(\varphi) = \int_{S_2'} \Psi(\omega) \exp \left(\mathrm{i}\left\langle \omega, \varphi \right\rangle \right) \mathrm{d}\mu (\omega) \\
    = \int_{S_2'} I_{0}(\omega) \exp \left[ \sum_{j=1}^n  \beta_j \Biggl(\frac{p_{\theta}(t)}{m_0R} - \big< \omega_\theta, E (\delta_{t}-\delta_{s_j}) \big> \Biggr)\right] \exp \left(\mathrm{i}\left\langle \omega, \varphi \right\rangle \right) \mathrm{d}\mu (\omega) \\
    = \exp \left( \frac{p_{\theta}(t)}{m_0R}\sum_{j=1}^n \beta_j \right) \\
    \times\int_{S_2'} I_{0}(\omega) \exp \left( \mathrm{i}   \left\langle (\omega_{\theta}, \omega_p), \mathrm{i}\left(E\sum_{j=1}^n \beta_j (\delta_{t}-\delta_{s_j}),\, 0\right) \right\rangle  \right)  \exp \left(\mathrm{i}\left\langle \omega, \varphi \right\rangle \right) \mathrm{d}\mu (\omega) \\
    = \exp \left( \frac{p_{\theta}'}{m_0R}\sum_{j=1}^n \beta_j \right) 
     \int_{S_2'} I_{0}(\omega) \exp \left( \mathrm{i}   \left\langle \omega, \mathrm{i}\xi_t \right\rangle  \right) \exp \left(\mathrm{i}\left\langle \omega, \varphi \right\rangle \right) \mathrm{d}\mu (\omega) \\
    = \exp \left( \frac{p_{\theta}'}{m_0R}\sum_{j=1}^n \beta_j \right)  
     TI_{0}(\varphi + \mathrm{i}\xi_t)
    \label{appendix: T-transform}
\end{multline}
where $\xi_t := \left(E\sum_{j=1}^n \beta_j(\delta_{t}-\delta_{s_j}) ,\, 0 \right)$.

We note here that since both of the cases considered above (see Subsections~\ref{case1} and \ref{case2}) produce results that have the conservation of the magnitude of the conjugate momentum, we may as well choose the calculations of the \textit{T}-transform of the particle on a circle integrand that from either of the two (i.e., when $\alpha = 0$), which we expect to produce the same results. 

With Eqs.~\eqref{M}, \eqref{f+g}, and \eqref{exp-of-u}, we may explicitly calculate for Eq.~\eqref{appendix: T-transform} with a small perturbation $\varepsilon > 0$ using Lemma~\ref{lemma} (let $\xi_{t,\theta}=E\sum_{j=1}^n \beta_j(\delta_{t}-\delta_{s_j})$):
\begin{multline}
    T\Psi_\varepsilon(\varphi) = \exp \left( \frac{p_{\theta}'}{m_0R}\sum_{j=1}^n \beta_j \right)  \frac{\sqrt{t-t_0}}{\sqrt{2\pi \hbar m_0 \varepsilon}} \\
    \times \exp \Biggl\{ -\frac{\mathrm{i}p_{\theta,0}}{2\hbar m_0}  \Biggl[ p_\theta'(t-t_0)+  \left( p_\theta' - p_{\theta,0}\right)(t-t_0 + 2a) \Biggr] \Biggr\}\\
    \times \exp \Biggl\{ -\frac{1}{2} \Biggl[ (\varepsilon + 1)\int_{[t_{0},t)^{c}} (\varphi_\theta(s) + \xi_{t,\theta}(s))^2 \,\mathrm{d}s \\
    + \varepsilon\int_{t_{0}}^t \left(\varphi_\theta(s) + \xi_{t,\theta}(s) - \frac{1}{\sqrt{\hbar m_0}} (p_\theta(t)-p_{\theta,0})\right)^2\,\mathrm{d}s \\
 -\mathrm{i}\int_{t_0}^t \varphi^2_{p_{\theta}}(s)\,\mathrm{d}s+\int_{[t_{0},t)^{c}}\varphi^2_{p_{\theta}}(s)\,\mathrm{d}s
 \Biggr] + \frac{t-t_0}{2\varepsilon \hbar m_0}\Biggl[ -(p_\theta' - p_{\theta,0})^2 \\
 + 2\mathrm{i}(p_\theta' - p_{\theta,0}) \biggl[\varepsilon \biggl( \frac{\sqrt{\hbar m_0}}{t-t_0}  \int_{t_0}^t (\varphi_\theta(s) + \xi_{t,\theta}(s))\,\mathrm{d}s - (p_\theta'-p_{\theta,0}) \biggr) \\
 + \frac{\sqrt{\hbar m_0}}{t-t_0} \int_{t_0}^t \varphi_{p_\theta}(s)\,\mathrm{d}s \biggr] + \biggl[\varepsilon \biggl( \frac{\sqrt{\hbar m_0}}{t-t_0}  \int_{t_0}^t  (\varphi_\theta(s) + \xi_{t,\theta}(s))\,\mathrm{d}s - (p_\theta'-p_{\theta,0}) \biggr) \\
 + \frac{\sqrt{\hbar m_0}}{t-t_0} \int_{t_0}^t \varphi_{p_\theta}(s)\,\mathrm{d}s \biggr]^2 \Biggr]  \Biggr\} \\
= TI_{0,\varepsilon}(\varphi) G_\varepsilon(p_\theta',p_{\theta,0},\xi_{t,\theta},t,t_0),
    \label{appendix: T-transform2}
\end{multline}
where
\begin{multline}
     G_\varepsilon(p_\theta',p_{\theta,0},\xi_{t,\theta},t,t_0) = \exp \left( \frac{p_{\theta}'}{m_0R}\sum_{j=1}^n \beta_j \right) \\
     \times \exp \Biggl\{ -\frac{1}{2} \Biggl[ (\varepsilon + 1)\int_{[t_{0},t)^{c}} (2\varphi_\theta(s) \xi_{t,\theta}(s) + \xi^2_{t,\theta}(s)) \,\mathrm{d}s + \varepsilon\int_{t_{0}}^t\biggl( 2\xi_{t,\theta}(s) \Big( \varphi_\theta(s) \\
    - \frac{1}{\sqrt{\hbar m_0}} (p_\theta(t)-p_{\theta,0})\Big) + \Big(\xi_{t,\theta}(s) \Big)^2 \biggr)\,\mathrm{d}s \Biggr] \\
    + \frac{t-t_0}{2\varepsilon \hbar m_0}\Biggl[  2\mathrm{i}(p_\theta' - p_{\theta,0}) \varepsilon  \frac{\sqrt{\hbar m_0}}{t-t_0}  \int_{t_0}^t  \xi_{t,\theta}(s)\,\mathrm{d}s   + 2\varepsilon^2\frac{\sqrt{\hbar m_0}}{t-t_0}  \int_{t_0}^t  \xi_{t,\theta}(s) \,\mathrm{d}s \\
    \times\biggl( \frac{\sqrt{\hbar m_0}}{t-t_0}  \int_{t_0}^t  \varphi_\theta(s)\,\mathrm{d}s - (p_\theta'-p_{\theta,0}) + \frac{\sqrt{\hbar m_0}}{\varepsilon (t-t_0)} \int_{t_0}^t \varphi_{p_\theta}(s)\,\mathrm{d}s \biggr) \\
 +\biggl(\varepsilon\frac{\sqrt{\hbar m_0}}{t-t_0}  \int_{t_0}^t  \xi_{t,\theta}(s)\,\mathrm{d}s \biggr)^2 \Biggr]  \Biggr\}.
\end{multline}
Taking the limit of $\varepsilon \rightarrow 0$, we have
\begin{multline}
    T\Psi(\varphi) =  TI_{0}(\varphi) G(p_\theta',p_{\theta,0},\xi_{t,\theta},t,t_0) \\
    = \exp \left( \frac{p_{\theta}'}{m_0R}\sum_{j=1}^n \beta_j \right) TI_{0}(\varphi) \exp \Biggl\{ \frac{1}{t-t_0}\int_{t_{0}}^{t}\xi_{t,\theta}(s)\,\mathrm{d}s\int_{t_{0}}^{t}\varphi_{p_\theta}(s)\,\mathrm{d}s \\
 + \frac{t-t_0}{\hbar m_0}\Biggl[ \mathrm{i}(p_\theta' - p_{\theta,0}) \frac{\sqrt{\hbar m_0}}{t-t_0}  \int_{t_0}^t  \xi_{t,\theta}(s)\,\mathrm{d}s  \Biggr]  \Biggr\} \\
 = \exp \left( \frac{p_{\theta}'}{m_0R}\sum_{j=1}^n \beta_j \right) TI_{0}(\varphi) \exp \left( \frac{1}{t-t_0}\int_{t_{0}}^{t}\xi_{t,\theta}(s)\,\mathrm{d}s\int_{t_{0}}^{t}\varphi_{p_\theta}(s)\,\mathrm{d}s \right) \\
 = \exp \left( \frac{p_{\theta}'}{m_0R}\sum_{j=1}^n \beta_j \right) TI_{0}(\varphi) \exp \left( \frac{1}{t-t_0}\sqrt{\frac{\hbar}{m_0R^2}}\int_{t_{0}}^{t}\varphi_{p_\theta}(s)\,\mathrm{d}s\sum_{j=1}^n \beta_j(t-s_j) \right) 
    \label{appendix: T-transform4}
\end{multline}
since the exponential term containing $(p_\theta' - p_{\theta,0})$ vanishes due to the delta function that serves to conserve the magnitude of the conjugate momentum contained in the $TI_{0}(\varphi)$.  As already pointed out, we would also have obtained a result similar to that in Eq.~\eqref{appendix: T-transform4} had we used the \textit{T}-transform of the integrand in Eq.~\eqref{withwindingTI(f)} (with $\alpha$ = 0).

\begin{pro}
    Let $n \in \mathbb{N}_0$, $s_j \in [t_0,t]$, and $\beta_j \in \mathbb{R}$, $1 \leq j \leq n$ be given.  Then, the product
    \begin{equation}
        \Psi_n =  I_{0} \exp \left(\sum_{j=1}^n \beta_j \dot{\theta}(s_j)  \right)
    \end{equation}
    defined for any $\varphi \in S_2$ by
    \begin{equation}
        T\Psi_n(\varphi) = \exp \left( \dot{\theta}(t)\sum_{j=1}^n \beta_j  \right) TI_{0}(\varphi + \mathrm{i}\xi_t)
    \end{equation}
    is a well-defined element in the Kondratiev distribution space $(S_2)^{-1}$.
\end{pro}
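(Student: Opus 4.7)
The plan is to apply the characterization Theorem~\ref{Th1}: to prove $\Psi_n\in(S_2)^{-1}$, it suffices to exhibit a function holomorphic on some neighborhood $U_{p,q}\subset S_{2,\mathbb{C}}$ that agrees with the candidate $T\Psi_n(\varphi)=\exp\!\left(\dot{\theta}(t)\sum_j\beta_j\right)TI_0(\varphi+\mathrm{i}\xi_t)$. The main difficulty is that $\xi_t$ contains Dirac distributions $\delta_t,\delta_{s_j}$, so the translated argument $\varphi+\mathrm{i}\xi_t$ is not in $S_{2,\mathbb{C}}$; the formula must be read as the result of the explicit computation performed in Eqs.~\eqref{appendix: T-transform}--\eqref{appendix: T-transform4}. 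I would therefore proceed by regularization, following the route used for $I_{V_{\mathrm{AB}}}$ in Subsection~\ref{case1}.

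First, for each $\varepsilon>0$ I would introduce $\Psi_{n,\varepsilon}$ as the image under Lemma~\ref{lemma} of the modified data $(K,g+\mathrm{i}\xi_t,\eta_t,y)$ corresponding to the $\varepsilon$-perturbed inverse $\bm{N}_\varepsilon^{-1}$ from Eq.~\eqref{M}. By the condition~\eqref{conditionofthelemma} verified there for $I_0$, $\Psi_{n,\varepsilon}$ is a bona fide Hida distribution, and Lemma~\ref{lemma} yields the closed-form expression $T\Psi_{n,\varepsilon}(\varphi)=TI_{0,\varepsilon}(\varphi)\,G_\varepsilon(p_\theta',p_{\theta,0},\xi_{t,\theta},t,t_0)$ displayed in Eq.~\eqref{appendix: T-transform2}. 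Each $T\Psi_{n,\varepsilon}$ is holomorphic on a fixed neighborhood $U_{p,q}$ because it is a finite product of exponentials of forms that are polynomial (up to degree two) in the pairings $\langle\cdot,\varphi_\theta\rangle$, $\langle\cdot,\varphi_{p_\theta}\rangle$, times the already holomorphic factor $\sqrt{(t-t_0)/(2\pi\varepsilon\hbar m_0)}$.

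Next, I would pass to the limit $\varepsilon\to 0$ using Corollary~\ref{seq}. The pointwise convergence of $T\Psi_{n,\varepsilon}(\varphi)$ to the right-hand side of Eq.~\eqref{appendix: T-transform4} is the content of the computation already carried out in the text. For the uniform bound I would shrink $U_{p,q}$ so that, for $p$ large enough, the Sobolev-type estimate $|\langle\varphi_\theta,\delta_t-\delta_{s_j}\rangle|\le c_p|\varphi|_p$ and $|\int_{t_0}^t\varphi_{p_\theta}(s)\,\mathrm{d}s|\le c_p|\varphi|_p$ hold, making every exponent in $G_\varepsilon$ bounded uniformly in both $\varphi\in U_{p,q}$ and $\varepsilon\in(0,1]$; the only potentially singular factor $1/\sqrt{\varepsilon}$ is absorbed into the normalization $\mathrm{N}$ exactly as in Remark~\ref{normalized}. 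Corollary~\ref{seq} then gives $\Psi_{n,\varepsilon}\to\Psi_n$ strongly in $(S_2)^{-1}$, and the limiting $T$-transform coincides with the proposed formula.

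The step I expect to be the main obstacle is the uniform-in-$\varepsilon$ control of the exponent containing the factor $(t-t_0)/(2\varepsilon\hbar m_0)$ in Eq.~\eqref{appendix: T-transform2}: the apparently diverging $\varepsilon^{-1}$ is multiplied by terms carrying compensating factors of $\varepsilon$ (those coming from the off-diagonal entries of $\bm{N}_\varepsilon^{-1}$), and one must check carefully that after this cancellation the remainder is holomorphic and bounded on $U_{p,q}$. Once this algebraic cancellation is made transparent, the holomorphy of the limit $F(\varphi)=\exp(\dot\theta(t)\sum_j\beta_j)\,TI_0(\varphi+\mathrm{i}\xi_t)$ follows from the corresponding property of $TI_0$ established in Subsection~\ref{case1} (with $\alpha=0$), and Theorem~\ref{Th1} concludes that $\Psi_n\in(S_2)^{-1}$ with the claimed $T$-transform.
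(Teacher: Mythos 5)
Your overall strategy --- regularize with $\varepsilon$, compute the $T$-transform via Lemma~\ref{lemma}, let $\varepsilon\to 0$, and invoke the characterization theorem --- is exactly what the paper does in the displays \emph{preceding} the proposition (Eqs.~\eqref{appendix: T-transform2}--\eqref{appendix: T-transform4}), but the paper's actual proof of the proposition is much shorter and avoids the convergence machinery entirely. It factorizes $T\Psi_n(\varphi)=TI_0(\varphi)\,G_n(\varphi,\xi_t)$ with the explicit limiting factor $G_n(\varphi,\xi_t)=\exp\bigl(\frac{p_\theta'}{m_0R}\sum_j\beta_j\bigr)\exp\bigl(\frac{1}{t-t_0}\sqrt{\hbar/(m_0R^2)}\int_{t_0}^t\varphi_{p_\theta}(s)\,\mathrm{d}s\,\sum_j\beta_j(t-s_j)\bigr)$, takes as already established that $I_0$ is a well-defined distribution with $TI_0$ a $U$-functional, and then only checks that $G_n$ is ray-entire and satisfies $|G_n(\varphi,\xi_t)|\le\prod_{j=1}^n\exp(C|\beta_j|)$ on a neighborhood, so that the product is again holomorphic and bounded and Theorem~\ref{Th1} applies. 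That bound is the real content of the proof, since it is reused verbatim for the Bochner integrability in Lemma~\ref{appendix: lemma} and for the summability of the perturbation series in Eq.~\eqref{appendix: bound3}; your write-up never isolates it.

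The genuine weak point in your route is the uniform-in-$\varepsilon$ bound needed for Corollary~\ref{seq}. You claim the singular factor $1/\sqrt{\varepsilon}$ is ``absorbed into the normalization $\mathrm{N}$ as in Remark~\ref{normalized}''; that is not what happens. Remark~\ref{normalized} and the choice $\mathrm{N}=\sqrt{\det(\mathrm{Id}+K)}$ only remove the divergent Fredholm determinant; the factor $\sqrt{(t-t_0)/(2\pi\varepsilon\hbar m_0)}$ coming from $\det(M_{\bm{N}_\varepsilon^{-1}})$ survives and, together with $\exp(-(t-t_0)(p_\theta'-p_{\theta,0})^2/(2\varepsilon\hbar m_0))$, is precisely the nascent delta function that produces $\delta(p_\theta'-p_{\theta,0})$ in the limit. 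Consequently $T\Psi_{n,\varepsilon}(\varphi)$ is \emph{not} bounded uniformly in $\varepsilon$ at $p_\theta'=p_{\theta,0}$, so hypothesis 2 of Corollary~\ref{seq} fails as you have set things up, and the convergence to a delta function in the external parameter $p_\theta'$ is outside the scope of that corollary. The paper dodges this by never re-deriving $TI_0$ inside the proof: it treats $TI_0$ as a black box and only has to control the new, genuinely bounded factor $G_n$. If you want to keep the $\varepsilon$-limit argument, you should run it on the ratio $T\Psi_{n,\varepsilon}/TI_{0,\varepsilon}=G_\varepsilon$, which is uniformly bounded and converges to $G_n$, and then multiply back by the already-established $TI_0$ --- at which point you have reproduced the paper's proof.
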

\begin{proof}
    Let 
\begin{equation}
    T\Psi_n(\varphi) = TI_{0}(\varphi) G_n(\varphi, \xi_t),
    \label{TofPsi}
\end{equation}
with
\begin{equation}
    G_n(\varphi, \xi_t) := \exp \left( \frac{p_{\theta}'}{m_0R}\sum_{j=1}^n \beta_j \right) \exp \left( \frac{1}{t-t_0}\sqrt{\frac{\hbar}{m_0R^2}}\int_{t_{0}}^{t}\varphi_{p_\theta}(s)\,\mathrm{d}s\sum_{j=1}^n \beta_j(t-s_j) \right).
\end{equation}
We note that in Eq.~\eqref{TofPsi}, since $I_{0}(\varphi)$ has already been shown to be a Hida distribution as guaranteed by Lemma~\ref{lemma}, hence we only need to show that $G_n(\varphi, \xi_t)$ satisfies the conditions of Definition~\ref{def:holmorphy} and Theorem~\ref{Th1}.  This will show that $\Psi_n$ is an element in $(S_2)^{-1}$.  As $G_n(\varphi_0 + z\varphi, \xi_t)$ entire in $z$ being obvious, we proceed to calculate the boundedness of $G_n(\varphi, \xi_t)$:
\begin{multline}
    \left| G_n(\varphi, \xi_t) \right| 
    \leq \exp \left| \frac{p_{\theta}'}{m_0 R}\sum_{j=1}^n \beta_j  \right|\exp \left| \frac{1}{t-t_0}\sqrt{\frac{\hbar}{m_0R^2}}\int_{t_{0}}^{t}\varphi_{p_\theta}(s)\,\mathrm{d}s\sum_{j=1}^n \beta_j(t-s_j) \right| \\ 
    \leq \exp \left( \frac{|p_{\theta}'|}{m_0 R}\sum_{j=1}^n |\beta_j| \right) \exp \left( \sqrt{\frac{\hbar}{m_0R^2}} \sqrt{t-t_0} \left|\varphi_{p_\theta, \Delta}\right|_0\sum_{j=1}^n |\beta_j| \right) \\
    = \prod_{j=1}^n \exp \left( C |\beta_j|  \right) \quad \quad \mathrm{with} \quad \quad C = C(t-t_0, \left|\varphi_{\Delta}\right|_0, |p_{\theta}|).
    \label{appendix: bound1}
\end{multline}
where $\Delta := t - t_0$.
\end{proof}

\begin{lem}\label{appendix: lemma}
    For every $n \in \mathbb{N}_0$, the integral
    \[
    \int_{[t_0,t)^n} \mathrm{d}^n s \int_{\mathbb{R}^n} \prod_{j=1}^n \mathrm{d}m(\beta_j) \, I_{0} \mathrm{e}^{\sum_{j=1}^n \beta_j \dot{\theta}(s_j)}
    \]
    defined for any $\varphi \in S_2$ by
    \begin{eqnarray}
        &&\int_{[t_0,t)^n} \mathrm{d}^n s \int_{\mathbb{R}^n} \prod_{j=1}^n \mathrm{d}m(\beta_j) \, T\left( I_{0} \mathrm{e}^{\sum_{j=1}^n \beta_j \dot{\theta}(s_j)}\right)(\varphi) \nonumber\\
        &&= TI_{0}(\varphi)\int_{[t_0,t)^n} \mathrm{d}^n s \int_{\mathbb{R}^n} \prod_{j=1}^n \mathrm{d}m(\beta_j) \, G_n(\varphi, \xi_t)
    \end{eqnarray}
\end{lem}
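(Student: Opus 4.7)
The plan is to apply Corollary~\ref{Bochner} to the $(S_2)^{-1}$-valued mapping $\lambda=(s_1,\dots,s_n,\beta_1,\dots,\beta_n) \mapsto \Psi_n(\lambda) := I_{0}\exp\!\big(\sum_{j=1}^n \beta_j \dot{\theta}(s_j)\big)$ defined on the product measure space $\big([t_0,t)^n \times \mathbb{R}^n,\; \mathrm{Leb}^{\otimes n}\otimes |m|^{\otimes n}\big)$. This single corollary yields simultaneously membership of the integral in $(S_2)^{-1}$ and the permissibility of interchanging the $T$-transform with the iterated integral, which is exactly what the statement claims after pulling the $\lambda$-independent factor $TI_{0}(\varphi)$ outside.

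First I would verify the three hypotheses of Corollary~\ref{Bochner}. Holomorphicity on a common $U_{p,q}$ (condition 1) is inherited directly from the preceding proposition, since $T\Psi_n(\varphi) = TI_{0}(\varphi)\, G_n(\varphi,\xi_t)$ and both factors are holomorphic at zero. Measurability in $\lambda$ (condition 2) follows from the explicit formula for $G_n$: the $\lambda$-dependence enters continuously through $\sum_j \beta_j$ and $\sum_j \beta_j(t-s_j)$, while $TI_{0}(\varphi)$ does not depend on $\lambda$ at all.

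The crux is producing an integrable majorant (condition 3). From the proof of the preceding proposition, on a suitably chosen $U_{p,q}$ one already has the estimate
\[
|G_n(\varphi,\xi_t)| \leq \prod_{j=1}^n \exp\!\big(C|\beta_j|\big),
\]
where $C = C(t-t_0, |\varphi_{\Delta}|_0, |p_{\theta}'|)$ is controlled uniformly in $\varphi \in U_{p,q}$ by a Schwartz seminorm bound, and $|TI_{0}(\varphi)|$ is bounded on $U_{p,q}$ by holomorphy. Thus
\[
|T\Psi_n(\varphi)| \;\leq\; M \prod_{j=1}^n \exp\!\big(C|\beta_j|\big),\qquad \varphi \in U_{p,q},
\]
with the right-hand side independent of $\varphi$ and of $s_j$, and with $s_j$ ranging over the bounded set $[t_0,t)$. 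Integrability of this majorant against $\mathrm{Leb}^{\otimes n}\otimes |m|^{\otimes n}$ is then immediate from the finiteness of $[t_0,t)^n$ and hypothesis~\eqref{finitecondition}, which guarantees $\int_{\mathbb{R}} \exp(C|\beta|)\,\mathrm{d}|m|(\beta) < \infty$ for every $C>0$, via Fubini on the product measure.

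The main obstacle I anticipate is ensuring that the constant $C$ from the previous proposition can be chosen uniformly over a single fixed neighborhood $U_{p,q}$, independent of $\lambda$. Concretely, one must verify that the relevant norms of the test function restricted to $[t_0,t)$, such as $|\varphi_{p_\theta,\Delta}|_0$, are dominated by a single Schwartz seminorm $|\varphi|_p$ for one $p\in\mathbb{N}_0$, so that the estimate of the preceding proposition applies with a $\lambda$-uniform constant. Once this uniformity is established, Corollary~\ref{Bochner} yields $\int_{[t_0,t)^n}\mathrm{d}^n s \int_{\mathbb{R}^n}\prod_j \mathrm{d}m(\beta_j)\,\Psi_n \in (S_2)^{-1}$, and the commutation of $T$ with the integral, together with the factorization $T\Psi_n(\varphi) = TI_{0}(\varphi)\,G_n(\varphi,\xi_t)$, produces the claimed identity.
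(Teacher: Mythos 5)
Your proposal is correct and follows essentially the same route as the paper: both apply Corollary~\ref{Bochner} to the map $(s_1,\dots,s_n,\beta_1,\dots,\beta_n)\mapsto \Psi_n$, using the bound $|G_n(\varphi,\xi_t)|\le\prod_{j=1}^n\exp(C|\beta_j|)$ from Eq.~\eqref{appendix: bound1} as the integrable majorant, with finiteness guaranteed by condition~\eqref{finitecondition}. Your write-up is in fact more careful than the paper's, since you explicitly address the uniformity of $C$ over a fixed $U_{p,q}$ and the boundedness of $|TI_{0}(\varphi)|$ there, points the paper leaves implicit.
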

is a Kondratiev distribution.
\begin{proof}
    It follow from Eq.~\eqref{appendix: bound1}, the term $G_n(\varphi, \xi_t)$ is linearly bound, 
    and so $T\Psi_n(\varphi)$ is measurable for any $\varphi \in S_2$.  Thus in order to apply Corollary~\ref{Bochner} (Bochner integrability) we only need to see the absolute integrability of $G_n(\varphi, \xi_t)$ through the following bound
    \begin{eqnarray}
         && \left| \int_{[t_0,t)^n} \mathrm{d}^n s\int_{\mathbb{R}} \mathrm{d}m(\beta_1) \ldots \mathrm{d}m(\beta_n) \, G_n(\varphi, \xi_t) \right| \nonumber\\
         &&\leq \int_{[t_0,t)^n} \mathrm{d}^n s\int_{\mathbb{R}} \mathrm{d}m(\beta_1) \ldots \mathrm{d}m(\beta_n) \, \prod_{j=1}^n \exp \left( C |\beta_j|  \right) \nonumber\\
         && = \left( (t-t_0) \int_{\mathbb{R}} \exp \left( C |\beta|  \right) \mathrm{d}m(\beta) \right)^n,
         \label{bound2}
    \end{eqnarray}
which is finite since the measure satisfies the condition in Eq.~\eqref{finitecondition}.
\end{proof}

Due to Lemma~\ref{appendix: lemma} there exists an open neighborhood $U$ independent of $n$ and
\begin{equation}
    I_{V,n} := \int_{[t_0,t)^n} \mathrm{d}^n s \int_{\mathbb{R}} \prod_{j=1}^n \mathrm{d}m(\beta_j) \, \Psi_n \in (S_2)^{-1},\; \forall n \in \mathbb{N}_0
\end{equation}
is holomorphic on $U$, fulfilling first part of Corollary~\ref{seq}.  Then, recalling that $TI_0(\varphi)$ has been proven as a Hida distribution and thus a $U$-functional, we can bound $TI_{V,n}(\varphi)$ by
\begin{multline}
    \left| TI_{V}(\varphi) \right| \leq \sum_{n=0}^\infty \frac{1}{n!} \left| TI_{V,n}(\varphi) \right|(\varphi)\\
    \leq \left| TI_{0}(\varphi) \right| \sum_{n=0}^\infty \frac{(1/\hbar)^n}{n!} \int_{[t_0,t)^n} \mathrm{d}^n s \int_{\mathbb{R}^n} \prod_{j=1}^n \mathrm{d}m(\beta_j) \, \left| G_n(\varphi, \xi_t) \right| \\
    \leq \left| TI_{0}(\varphi) \right| \sum_{n=0}^\infty \frac{(1/\hbar)^n}{n!} \left( (t-t_0) \int_{\mathbb{R}} \exp \left( C |\beta|  \right) \mathrm{d}m(\beta) \right)^n \\
    = \left| TI_{0}(\varphi) \right| \exp\left( (t-t_0) \int_{\mathbb{R}} \exp \left( C |\beta|  \right) \mathrm{d}m(\beta) \right) < \infty
    \label{appendix: bound3}
\end{multline}
for $\varphi \in U$, so that we have shown that the series converges in $(S_2)^{-1}$.

We therefore have the following
\begin{thm}
    For a potential of the form
    \begin{equation}
        V(\dot{\theta}(s)) = \int_{\mathbb{R}} \mathrm{e}^{\beta \dot{\theta}(s)} \, \mathrm{d}m(\beta) \qquad \dot{\theta}(s) \in \mathbb{R},
    \end{equation}    
    where $m$ is any complex measure satisfying
    \begin{equation}
        \int_{\mathbb{R}} \mathrm{e}^{C|b|} \, \mathrm{d}|m|(b) < \infty, \qquad \forall  C > 0,
    \end{equation}
    the Feynman integrand
    \begin{multline}
    I_V = I_{0} \exp \Biggl( -\frac{\mathrm{i}}{\hbar} \int_{t_0}^t V(\dot{\theta}(s)) \, \mathrm{d}s \Biggr) \\
    =  \sum_{n=0}^\infty \frac{(-\mathrm{i/\hbar})^n}{n!} \int_{[t_0,t)^n} \mathrm{d}^n s \int_{\mathbb{R}^n} \prod_{j=1}^n \mathrm{d}m(\beta_j) \, I_{0} \exp \left( {\sum_{j=1}^n \beta_j \dot{\theta}(s_j)} \right)
\end{multline}
    exists as a generalized white noise functional.  The series converges strongly in $(S_2)^{-1}$, and the integrals exist in the sense of Bochner integrals.
\end{thm}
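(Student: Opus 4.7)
The plan is to split the theorem into its three assertions and dispatch each using the machinery already assembled in this subsection. First, for each fixed $n \in \mathbb{N}_0$, the Bochner integrability of the $n$-fold integrand is precisely Lemma~\ref{appendix: lemma}: its proof shows that $G_n(\varphi, \xi_t)$ is holomorphic in $\varphi$ on a neighborhood $U \subset S_{2,\mathbb{C}}$ that does not depend on $n$, is jointly measurable in the parameters $(s_1,\ldots,s_n,\beta_1,\ldots,\beta_n)$, and admits the product bound $\prod_{j=1}^n \exp(C|\beta_j|)$ where $C$ depends only on $(t-t_0, |\varphi_\Delta|_0, |p_\theta'|)$. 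Combined with the integrability hypothesis~\eqref{finitecondition} on $m$, Corollary~\ref{Bochner} yields $I_{V,n} \in (S_2)^{-1}$ for every $n$.

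Second, to obtain strong convergence of the series in $(S_2)^{-1}$, I would apply Corollary~\ref{seq} to the sequence of partial sums $\Phi_N := \sum_{n=0}^N \tfrac{(-\mathrm{i}/\hbar)^n}{n!}\, I_{V,n}$. The three hypotheses are verified as follows: holomorphy of each $T\Phi_N$ on the common neighborhood $U$ follows from the $n$-independence of $U$ established above; the uniform bound is exactly the content of Eq.~\eqref{appendix: bound3}, which majorizes $|T\Phi_N(\varphi)|$ by
\[
|TI_0(\varphi)|\,\exp\Bigl((t-t_0)\int_{\mathbb{R}}\mathrm{e}^{C|\beta|}\,\mathrm{d}|m|(\beta)\Bigr),
\]
a quantity finite by~\eqref{finitecondition} and uniform in $N$; finally, since the same majorant is summable, the numerical sequence $(T\Phi_N(\varphi))_N$ is Cauchy in $\mathbb{C}$ for every $\varphi \in U$. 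Corollary~\ref{seq} then gives strong convergence of $\Phi_N$ in $(S_2)^{-1}$ and identifies the limit as $I_V$, which takes care of the first assertion automatically.

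The main conceptual obstacle is ensuring that the open neighborhood $U$ on which the $T$-transforms are holomorphic can be chosen \emph{independent of $n$}; without this, neither Corollary~\ref{Bochner} nor Corollary~\ref{seq} applies. This uniformity is secured by the factorized structure of $G_n(\varphi, \xi_t)$ in the variables $\beta_j$ exhibited in Eq.~\eqref{appendix: bound1}, which isolates the $\varphi$-dependence into the single constant $C=C(t-t_0, |\varphi_\Delta|_0, |p_\theta'|)$, combined with the exponential integrability hypothesis~\eqref{finitecondition} that controls the $\beta_j$-integrals uniformly for every $C > 0$. Once this uniformity is in hand, the Bochner integrability, the strong convergence, and the identification $I_V \in (S_2)^{-1}$ all follow in one stroke from the two corollaries to Theorem~\ref{Th1}.
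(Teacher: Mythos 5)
Your proposal is correct and follows essentially the same route as the paper: Bochner integrability of each $n$-fold integral via Lemma~\ref{appendix: lemma} and Corollary~\ref{Bochner} using the product bound of Eq.~\eqref{appendix: bound1}, then strong convergence of the partial sums via Corollary~\ref{seq} with the uniform, $N$-independent majorant of Eq.~\eqref{appendix: bound3} supplied by the integrability hypothesis \eqref{finitecondition}. Your explicit emphasis on choosing the holomorphy neighborhood $U$ independently of $n$ is exactly the point the paper also flags before invoking Corollary~\ref{seq}.
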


\begin{rem}
    It is straightforward to generalize to time-dependent potential; see Refs.~\cite{KSW99, GKSS96}.
\end{rem}

\begin{rem}
    For all $p_\theta'$, $p_{\theta,0}$, and $0 \leq t_0 < t$, 
    \begin{multline}
        K_V(p_\theta', t | p_{\theta,0}, t_0) := TI_V(0) \\
        = TI_{0}(0) \sum_{n=0}^\infty \frac{(-\mathrm{i/\hbar})^n}{n!} \int_{[t_0,t)^n} \mathrm{d}^n s \int_{\mathbb{R}^n} \prod_{j=1}^n \mathrm{d}m(\beta_j) \,  \exp \left( \frac{p_{\theta}'}{m_0 R}\sum_{j=1}^n \beta_j  \right) \\
        = \delta \left( p_\theta' -p_{\theta,0} \right) \exp \Biggl[ -\frac{\mathrm{i}p_{\theta,0}^2}{2m_0\hbar} (t-t_0) \Biggr] \\
        \times \sum_{n=0}^\infty \frac{(-\mathrm{i/\hbar})^n}{n!} \int_{[t_0,t)^n} \mathrm{d}^n s \int_{\mathbb{R}^n} \prod_{j=1}^n \mathrm{d}m(\beta_j) \,  \exp \left( \frac{p_{\theta}'}{m_0 R}\sum_{j=1}^n \beta_j  \right) \\
        = \delta \left( p_\theta' -p_{\theta,0} \right) \exp \Biggl[ -\frac{\mathrm{i}p_{\theta,0}^2}{2m_0\hbar} (t-t_0) \Biggr] \\
        \times \sum_{n=0}^\infty \frac{(-\mathrm{i/\hbar})^n}{n!} (t-t_0)^n \left[ \int_{\mathbb{R}} \mathrm{d}m(\beta) \,  \exp \left( \frac{p_{\theta}'}{m_0 R} \beta  \right) \right]^n \\
        = \delta \left( p_\theta' -p_{\theta,0} \right) \exp \Biggl[ -\frac{\mathrm{i}p_{\theta,0}^2}{2m_0\hbar} (t-t_0) \Biggr] \exp \left[ -\frac{\mathrm{i}}{\hbar} (t-t_0)  \int_{\mathbb{R}} \mathrm{d}m(\beta) \,  \exp \left( \frac{p_{\theta}'}{m_0 R} \beta  \right) \right] \\
        = \delta \left( p_\theta' -p_{\theta,0} \right) \exp \Biggl\{ -\frac{\mathrm{i}}{\hbar} (t-t_0) \Biggl[ \frac{p_{\theta,0}^2}{2m_0} + \int_{\mathbb{R}} \mathrm{d}m(\beta) \,  \exp \left( \frac{p_{\theta}'}{m_0 R} \beta  \right) \Biggr] \Biggr\}.
        \label{appendix: propagator}
    \end{multline}
    solves the Schr\"{o}dinger equation
\begin{multline}
    \mathrm{i}\hbar \frac{\partial}{\partial t} K(p_\theta', t | p_{\theta,0}, t_0) \\
    = \frac{(p_\theta')^2}{2m_0} K(p_\theta', t | p_{\theta,0}, t_0) + \int_{-\infty}^\infty \mathrm{d}p_{\theta,1} W(p_\theta' - p_{\theta,1})K(p_{\theta,1}, t | p_{\theta,0}, t_0),
    \label{appendix: schro}
\end{multline}
where
\begin{equation}
    W(p_\theta' - p_{\theta,1}) = \frac{1}{2\pi} \int_{0}^{2\pi} \mathrm{d}\theta \, \mathrm{e}^{-\mathrm{i}(p_\theta' - p_{\theta,1})\theta/\hbar} V(\dot{\theta}(s)),
\end{equation}
and
\begin{equation}
    V(\dot{\theta}(s)) = \int_{\mathbb{R}} \mathrm{e}^{\beta \dot{\theta}(s)} \, \mathrm{d}m(\beta)
\end{equation}
with the initial condition 
\begin{equation}
		\lim_{t\searrow t_0} K(p_\theta', t | p_{\theta,0}, t_0) = \delta(p_\theta'-p_{\theta,0}).
	\end{equation}
\end{rem}
\begin{proof}
    With $p_\theta' = p_{\theta,0} = p_{\theta,1}$, we have \[W(p_\theta' - p_{\theta,1}) = \frac{1}{2\pi } \int_{0}^{2\pi} \mathrm{d}\theta \, V(\dot{\theta}(s))  
    = \int_{\mathbb{R}} \exp\left({\frac{\beta }{m_0 R}} p_{\theta,1}\right) \, \mathrm{d}m(\beta) \]
    and so
    \begin{multline*}
        \int_{-\infty}^\infty \mathrm{d}p_{\theta,1} W(p_\theta' - p_{\theta,1})K(p_{\theta,1}, t | p_{\theta,0}, t_0) \\
        = \int_{-\infty}^\infty \mathrm{d}p_{\theta,1} \int_{\mathbb{R}} \exp \left(\frac{\beta}{m_0R}p_{\theta,1} \right) \, \mathrm{d}m(\beta) K(p_{\theta,1}, t | p_{\theta,0}, t_0) \\
        = \int_{-\infty}^\infty \mathrm{d}p_{\theta,1} \left[ \int_{\mathbb{R}}  \exp \left(\frac{\beta}{m_0R}p_{\theta,1} \right) \, \mathrm{d}m(\beta) \right] \\
        \times \delta \left( p_{\theta,1} -p_{\theta,0} \right) \exp \Biggl\{ -\frac{\mathrm{i}}{\hbar} (t-t_0) \Biggl[ \frac{p_{\theta,0}^2}{2m_0} + \int_{\mathbb{R}} \mathrm{d}m(\beta) \,  \exp \left( \frac{p_{\theta,1}}{m_0 R} \beta  \right) \Biggr] \Biggr\} \\
        = \left[ \int_{\mathbb{R}}  \exp \left(\frac{\beta}{m_0R}p_{\theta,0} \right) \, \mathrm{d}m(\beta) \right] \\
        \times \exp \Biggl\{ -\frac{\mathrm{i}}{\hbar} (t-t_0) \Biggl[ \frac{p_{\theta,0}^2}{2m_0} + \int_{\mathbb{R}} \mathrm{d}m(\beta) \,  \exp \left( \frac{p_{\theta,0}}{m_0 R} \beta  \right) \Biggr] \Biggr\}. \\
    \end{multline*}
It would then be straightforward to show that the Schr\"{o}dinger equation, Eq.~\eqref{appendix: schro}, is satisfied by the propagator.
\end{proof}

\subsection{Bound State AB Effect in terms of the Perturbation of the Particle on a Circle}

Here, we consider in the WNA framework a particular case treated in Subsection~\ref{subsec: perturbed} when we specify the measure $m$. We obtain the propagator for the bound state AB effect starting from the case of the particle on a circle perturbed by a potential that is exponentially growing in angular velocity, Eq.~\eqref{appendix: propagator}.

A well-known fact as confirmed by experiments (see, e.g., the theoretical description in Refs.~\cite{aharonov1959significance, peshkin1990aharonov} and some confirming experiments in Refs.~\cite{peng2010aharonov, xiu2011manipulating}) is that the AB effect is undetectable for a magnetic flux $\phi$ that is integer-multiple of $2\pi\hbar c/ e$ (called the London's unit).  In other words, it is detectable only for a fractional part of the London's unit 
as units of the magnetic flux. Therefore, we are motivated to attempt to modify the potential $V$ to fit into the description of the bound state AB effect.  Incorporating these values, we have
\begin{equation}
    \alpha = \frac{ke}{2\pi \hbar c} \phi = \frac{ke}{2\pi \hbar c} \frac{2\pi \hbar c}{ne} = \frac{k}{n}, \quad k,n\in \mathbb{Z}\setminus\{0\},
    \label{appendix: experimentalvalues}
\end{equation}
where $k$ and $n$ has been chosen so as to conform with the fact that a charge is always a multiple of the elementary charge $e$.  More particularly, the ratio of two charges, say $q_1 = ke$ and $q_2 =ne$, 
must be $q_1 / q_2 = k/n$ in order to obtain a fraction. 
\begin{defn} \label{def:ABdetectoperator}
    The AB effect is detectable if $({k}/{n}) \in \mathbb{Q} \setminus \mathbb{Z}$, otherwise, if $({k}/{n}) \in \mathbb{Z}$, then the AB effect is not detectable.
\end{defn}

Let us now consider the following measure:
\begin{equation}
    m := g\delta_{b\alpha}, \quad g, b \in \mathbb{R}
    \label{dm}
\end{equation}
where $g$ is a coupling constant and $b$ is some parameter, 
so that we have chosen a measure concentrated on the magnetic flux parameter ``$\alpha$" of the AB effect multiplied by $b$. 

Substituting Eq.~\eqref{dm}
into the propagator, Eq.~\eqref{appendix: propagator}, we have obtained a propagator in the polar conjugate momentum space for a particle on a circle perturbed by an exponentially growing potential containing the magnetic term $\alpha$:
\begin{multline}
        K_V(p_\theta', t | p_{\theta,0}, t_0)         = \delta ( p_\theta' -p_{\theta,0} ) \exp \Biggl\{ -\frac{\mathrm{i}}{\hbar} (t-t_0) \Biggl[ \frac{p_{\theta,0}^2}{2m_0} + g\exp \left( b\alpha \frac{p_{\theta}'}{m_0 R}  \right) \Biggr] \Biggr\}.
        \label{appendix: propagator2}
    \end{multline}

With $\alpha$ given in Eq.~\eqref{appendix: experimentalvalues}, we have the potential $V$ of the form (let $\dot{\theta}' := p_\theta' / (m_0 R)$)
    \begin{multline}
        V = g\sum_{j = 0}^\infty \frac{(-1)^j}{j!} \left( \frac{bk}{n} \frac{p_{\theta}'}{m_0 R}  \right)^j = g\sum_{j = 0}^\infty \frac{(-1)^j}{j!} \left( \frac{bk}{n} \dot{\theta}'  \right)^j \\
        = g\Biggl[1  - \left( \frac{bk}{n} \right) \dot{\theta}'  + \frac{1}{2}\left( \frac{bk}{n} \right)^2  \dot{\theta}'^2 - \frac{1}{6}\left( \frac{bk}{n} \right)^3  \dot{\theta}'^3 + \ldots \Biggr]. \\ 
 \label{Vwithvalues}
    \end{multline}
    The first order approximation of $V$ given in Eq.~(\ref{Vwithvalues})  is given by
    \begin{equation}
    V \approx g - \frac{gbk}{n}\dot{\theta}',
    \end{equation}
    and the polar conjugate momentum space propagator is given by
    \begin{multline}
        K_{V}(p_\theta', t | p_{\theta,0}, t_0) \\
        = \delta \left( p_\theta' -p_{\theta,0} \right) \exp \Biggl\{ -\frac{\mathrm{i}}{\hbar} (t-t_0) \Biggl[ \frac{p_{\theta,0}^2}{2m_0} +  g\sum_{j = 0}^\infty \frac{(-1)^j}{j!} \left( \frac{bk}{n} \frac{p_{\theta}'}{m_0 R}  \right)^j  \Biggr] \Biggr\} \\
        \approx 
        \delta \left( p_\theta' -p_{\theta,0} \right)  \exp \Biggl\{ -\frac{\mathrm{i}}{\hbar} (t-t_0) \Biggl[ g + \frac{p_{\theta,0}^2}{2m_0} -  \frac{gbk}{n}\frac{p_{\theta}'}{m_0 R}  \Biggr] \Biggr\}. \label{approxpropagator}
        \end{multline}
    This scenario can be deemed as a bound state AB effect potential when $g=1$ 
    and $b=(1+B^{-1})$ where $B=({k\hbar p_{\theta}'})/({nc}{m_0 R})$ for $p_{\theta}' \neq 0$.

We now state our alternative result in obtaining a bound state AB effect momentum space propagator:
\begin{thm}
If the potential $V$ is given by
\begin{equation} \label{Vwithghat}
    V(\theta) = g\exp \left( \frac{bk\hbar}{n c} \dot{\theta} \right), \quad\quad ({k}/{n}) \in \mathbb{Q} \setminus \mathbb{Z}
\end{equation}
then $V$ will lead to an approximation of the bound state AB effect potential with the appropriate values of $g$ and $b$.
 The conjugate momentum space propagator is given by Eq.~\eqref{approxpropagator}.
\end{thm}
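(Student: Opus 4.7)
The plan is to recognize Eq.~\eqref{Vwithghat} as a specialization of the exponentially growing potential framework from Subsection~\ref{subsec: perturbed} and to obtain the propagator by direct substitution into Eq.~\eqref{appendix: propagator}. No new white noise estimates are required: the analytic content has already been packaged into the existence result for $I_V$ and the associated propagator formula, so what remains is essentially an identification of parameters.

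First, I would identify the generating complex measure. Setting $m = g\,\delta_{b^*}$ with $b^* := bk\hbar/(nc)$, the spectral representation $V(\dot{\theta}) = \int_{\mathbb{R}} e^{\beta \dot{\theta}}\,dm(\beta)$ collapses to a single exponential and reproduces Eq.~\eqref{Vwithghat}. Because $m$ is supported at a single point, the admissibility condition Eq.~\eqref{finitecondition} is immediate: $\int_{\mathbb{R}} e^{C|\beta|}\,d|m|(\beta) = |g|\,e^{C|b^*|} < \infty$ for every $C > 0$. Hence $V$ falls within the class of potentials to which the propagator formula Eq.~\eqref{appendix: propagator} applies, and the existence of $I_V$ as a generalized white noise functional follows at once from the corresponding theorem of Subsection~\ref{subsec: perturbed}.

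Second, I would substitute this $m$ into Eq.~\eqref{appendix: propagator}. The $\beta$-integral reduces to $\int_{\mathbb{R}} \exp\!\bigl(\beta p_\theta'/(m_0 R)\bigr)\,dm(\beta) = g\exp\!\bigl(b^* p_\theta'/(m_0 R)\bigr)$, which yields the full nonlinear propagator. Expanding this exponential in the power series displayed in Eq.~\eqref{Vwithvalues} and keeping only the zeroth and first order terms in $bk/n$ delivers exactly the right-hand side of Eq.~\eqref{approxpropagator}, establishing part (ii) of the theorem. The preserved delta function $\delta(p_\theta'-p_{\theta,0})$ comes verbatim from the particle-on-a-circle factor $TI_{0}(0)$ already present in Eq.~\eqref{appendix: propagator}.

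Third, for the approximation claim I would compare the first order truncation $V \approx g - (gbk/n)\dot{\theta}'$ with the bound state AB potential $V_{\mathrm{AB}} = \alpha\dot{\theta} = (k/n)\dot{\theta}$. Choosing $g=1$ fixes the zeroth order term as a pure additive constant (which only shifts the overall phase of the propagator), while the linear coefficient is matched by letting $b$ absorb the residual factor coming from $b^* = bk\hbar/(nc)$. The prescription $b = 1 + B^{-1}$ with $B = k\hbar p_\theta'/(nc\,m_0 R)$ is engineered precisely to achieve this balance at first order. The main obstacle, such as it is, lies in this dimensional bookkeeping: the exponent in Eq.~\eqref{Vwithghat} carries the physical ratio $\hbar/c$, whereas the spectral parameter $\alpha = k/n$ is dimensionless, and reconciling the two forces the nontrivial $p_\theta'$-dependent choice of $b$ rather than a plain numerical constant. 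Once this matching is in place the theorem follows from the substitution in step two.
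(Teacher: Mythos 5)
Your proposal follows essentially the same route as the paper: the paper's own justification of this theorem is precisely the derivation in the preceding subsection, namely specializing the complex measure to the point mass $m = g\delta_{b\alpha}$, substituting into the propagator formula Eq.~\eqref{appendix: propagator}, expanding the resulting exponential as in Eq.~\eqref{Vwithvalues}, truncating at first order to obtain Eq.~\eqref{approxpropagator}, and then asserting the parameter identification $g=1$, $b = 1 + B^{-1}$. Your explicit check that a point mass trivially satisfies the admissibility condition Eq.~\eqref{finitecondition}, and your remark that the $\hbar/c$ factor in the exponent of Eq.~\eqref{Vwithghat} versus the dimensionless $\alpha = k/n$ is what forces the $p_\theta'$-dependent choice of $b$, are small additions the paper leaves implicit but are consistent with its argument.
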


\section{Conclusion and Discussion}\label{sec:conclusions}
In this paper, we have demonstrated that within the WNA framework, ignoring the winding, a charged particle in a bound state AB potential maintains the conservation of the magnitude of the conjugate momentum. This phenomenon can be attributed to the conservation of the total angular momentum of the system $\Vec{L}$. Classically, $\Vec{L} = \Vec{r} \times \Vec{p}$, and since the magnitude of $\Vec{r}$ is constant, the magnitude of $\Vec{p}$ must also remain constant to preserve $\Vec{L}$.  
This is also true for a particle on a circle without potential energy and no winding; for example, refer to Eq.~\eqref{freeparticleinacircle}.  

Moreover, it might initially appear that, when considering the possibility for winding (with its associated quantized energy levels), quantum mechanics (e.g., Eq.~\eqref{TI2}) suggests that 
\begin{itemize}
    \item the quantization of energy, reliant on the integer $l$ (also seen in Eq.~\eqref{TI2}, the propagator in momentum space), suggests that the magnitude of the conjugate momentum ($p_\theta$) may not generally be conserved; and
    \item if we apply a ``driving force", e.g., a centripetal force, to preserve $p_\theta$ or if $p_\theta$ remains conserved (either continuously or at specific moments), then Eq.~\eqref{TI2} simplifies to Eq.~\eqref{TI}, signifying the classical conservation of angular momentum. Within these time intervals, Eq.~\eqref{TI2} also indicates that the winding does not directly influence the likelihood of detecting the charged particle with momentum $p_\theta'$ at a later time $t > t_0$ in the presence of the AB effect potential $V_{\mathrm{AB}}$.
\end{itemize}

The nature of the AB effect is such that, in cases like the bound state with a fixed radius and a constant magnetic flux, 
the total angular momentum is conserved; as stated in, e.g., Ref.~\cite{peshkin1990aharonov}. In the WNA framework, we have shown that, despite the presence of the potential $V_{\mathrm{AB}}$, the magnitude of the conjugate momentum associated with the angular position remains conserved (See Eq.~\eqref{TI}). The AB effect, therefore, does not alter the system's ``general equilibrium'' (total angular momentum) as depicted in Fig.~\ref{solenoid1}. Instead, it only affects the probability through the magnetic flux term $2\alpha /R := e\phi/(\pi \hbar c R)$, as shown in Eq.~\eqref{TI}.  Moreover, we have also shown that, even when considering winding, the condition for the propagator (Eq.~\eqref{TI2}) to satisfy the corresponding Schr\"{o}dinger equation is that the magnitude of the conjugate momentum must remain conserved.

However, one question that remains is: Why were we unable to incorporate the winding number information into the momentum space propagator?  The explanation is rooted in the Heisenberg uncertainty principle; in this scenario, determining the conjugate momentum of a charged particle at a specific time results in the loss of information about its precise position, including the winding number.  Using WNA methods, our findings thus verify the physics of the bound state AB effect with constant radius and magnetic flux.  These results can serve as a springboard for the design of experiments to further analyze the bound state AB effect.

Driven by the incomplete exploration of the AB effect and thus the potentiality to examine it from a different perspective, we used another method within the WNA framework, detailed in Sec.~\ref{app:alternative}. This alternative method entails a mathematical model capable of generating an approximation to the AB effect in a bound state, provided that the appropriate parameters are applied.

\section*{Acknowledgment}
ARN would like to thank the Department of Science and Technology - Science Education Institute (DOST-SEI) of the Philippine Government for financial support in the form of a graduate scholarship.  ARN would also like to thank the Indonesian Government for granting the one-year research stay at the Sanata Dharma University, Yogyakarta, where this collaborative research work with its Mathematics Department was done.
JLS was partially supported by the Center for Research in Mathematics and Applications (CIMA) related to the Statistics, Stochastic Processes
and Applications (SSPA) group, through the grant UIDB/MAT/04674/2020
of FCT--Funda{\c c\~a}o para a Ci{\^e}ncia e a Tecnologia, Portugal.

\bibliographystyle{plain}

\end{document}